\documentclass[10pt,twocolumn]{article}

\usepackage[nocompress]{cite}
\usepackage{url}
\usepackage{amsmath,amssymb,graphicx}
\usepackage{booktabs}
\usepackage{multirow}
\usepackage{times}

\hyphenation{FPGA}

\newcommand{\mytitle}{Count-Free Single-Photon 3D Imaging with Race Logic}

\title{\mytitle \vspace{-0.1in}}

\author{
  Atul Ingle, David Maier \vspace{2pt} \\
  {\normalsize Portland State University}\\
  {\normalsize \texttt{\{ingle2, maier\}@pdx.edu}}
}
\date{}
\usepackage[per-mode=symbol]{siunitx}
\usepackage{nicefrac}
\usepackage{textcomp}
\usepackage{url}
\usepackage{array}
\usepackage{footmisc}
\usepackage{booktabs}
\usepackage{multirow}
\usepackage{amsthm}
\newtheorem*{theorem*}{Theorem}

\usepackage{geometry}
  \geometry{scale=.82}
  
\usepackage[table,xcdraw,dvipsnames]{xcolor}

\usepackage[pagebackref,breaklinks=true,colorlinks,allcolors=cyan,bookmarks=false]{hyperref}

\newcommand{\rnote}[1]{\textcolor{red}{#1}}

\newenvironment{myitem}
{ \begin{itemize}
    \setlength{\itemsep}{0pt}
    \setlength{\parskip}{0pt}
    \setlength{\parsep}{0pt}     }
{ \end{itemize}                }

\usepackage{relsize}
\newcommand*{\defeq}{\stackrel{\mathsmaller{\mathsf{def}}}{=}}

\begin{document}

\maketitle

\let\rnote\relax

\begin{abstract}
Single-photon cameras (SPCs) have emerged as a promising new technology for high-resolution 3D imaging. 
A single-photon 3D camera determines the round-trip time of a laser pulse by precisely capturing the arrival of individual photons at each camera pixel.
Constructing photon-timestamp histograms is a fundamental operation for a single-photon 3D camera.  
However, in-pixel histogram processing is computationally expensive and requires large amount of memory per pixel.
Digitizing and transferring photon timestamps to an off-sensor histogramming module is bandwidth and power hungry.
Can we estimate distances without explicitly storing photon counts?
Yes---here we present an online approach for distance estimation suitable for resource-constrained settings with limited bandwidth, memory and compute.
The two key ingredients of our approach are (a) processing photon streams using \emph{race logic}, which maintains photon data in the time-delay domain, and (b) constructing count-free \emph{equi-depth histograms} as opposed to conventional equi-width histograms.
Equi-depth histograms are a more succinct representation for ``peaky'' distributions, such as those obtained by an SPC pixel from a laser pulse reflected by a surface.
Our approach uses a \emph{binner} element that converges on the median (or, more generally, to another $k$-quantile) of a distribution.
We cascade multiple binners to form an equi-depth histogrammer that produces multi-bin histograms.
Our evaluation shows that this method can provide at least an order of magnitude reduction in bandwidth and power consumption while maintaining similar distance reconstruction accuracy as conventional histogram-based processing methods.
\end{abstract}

\begin{figure}[!ht]
\footnotesize
\addtolength{\tabcolsep}{-3pt}
\begin{center}
\begin{tabular}{@{}rccc@{}}
\toprule
\multicolumn{1}{l}{} & \begin{tabular}[c]{@{}c@{}}Conventional\\ Analog LiDAR\end{tabular} & \begin{tabular}[c]{@{}c@{}}Histogram-based\\ SPCs\end{tabular} & \begin{tabular}[c]{@{}c@{}}Count-free\\ SPCs \textcolor{RedOrange}{[Ours]}\end{tabular} \\ \midrule
FPS (speed)          & \cellcolor[HTML]{32CB00}moderate-to-high                            & \cellcolor[HTML]{FFCE93}low-to-moderate                        & \cellcolor[HTML]{32CB00}high                                          \\
x/y-resolution       & \cellcolor[HTML]{FFCE93}low                                         & \cellcolor[HTML]{32CB00}high                                   & \cellcolor[HTML]{32CB00}high                                          \\
z-resolution         & \cellcolor[HTML]{FFCE93}low-to-moderate                             & \cellcolor[HTML]{32CB00}high                                   & \cellcolor[HTML]{32CB00}high                                          \\
z-range              & \cellcolor[HTML]{32CB00}high                                        & \cellcolor[HTML]{32CB00}high                                   & \cellcolor[HTML]{32CB00}high                                          \\
power                & \cellcolor[HTML]{FFCE93}high                                        & \cellcolor[HTML]{32CB00}low                                    & \cellcolor[HTML]{32CB00}low                                           \\
bandwidth            & \cellcolor[HTML]{32CB00}low                                         & \cellcolor[HTML]{FFCE93}high                                   & \cellcolor[HTML]{32CB00}low                                           \\ \bottomrule

\end{tabular}
\end{center}
\medskip

\centering
\includegraphics[width=0.95\linewidth]{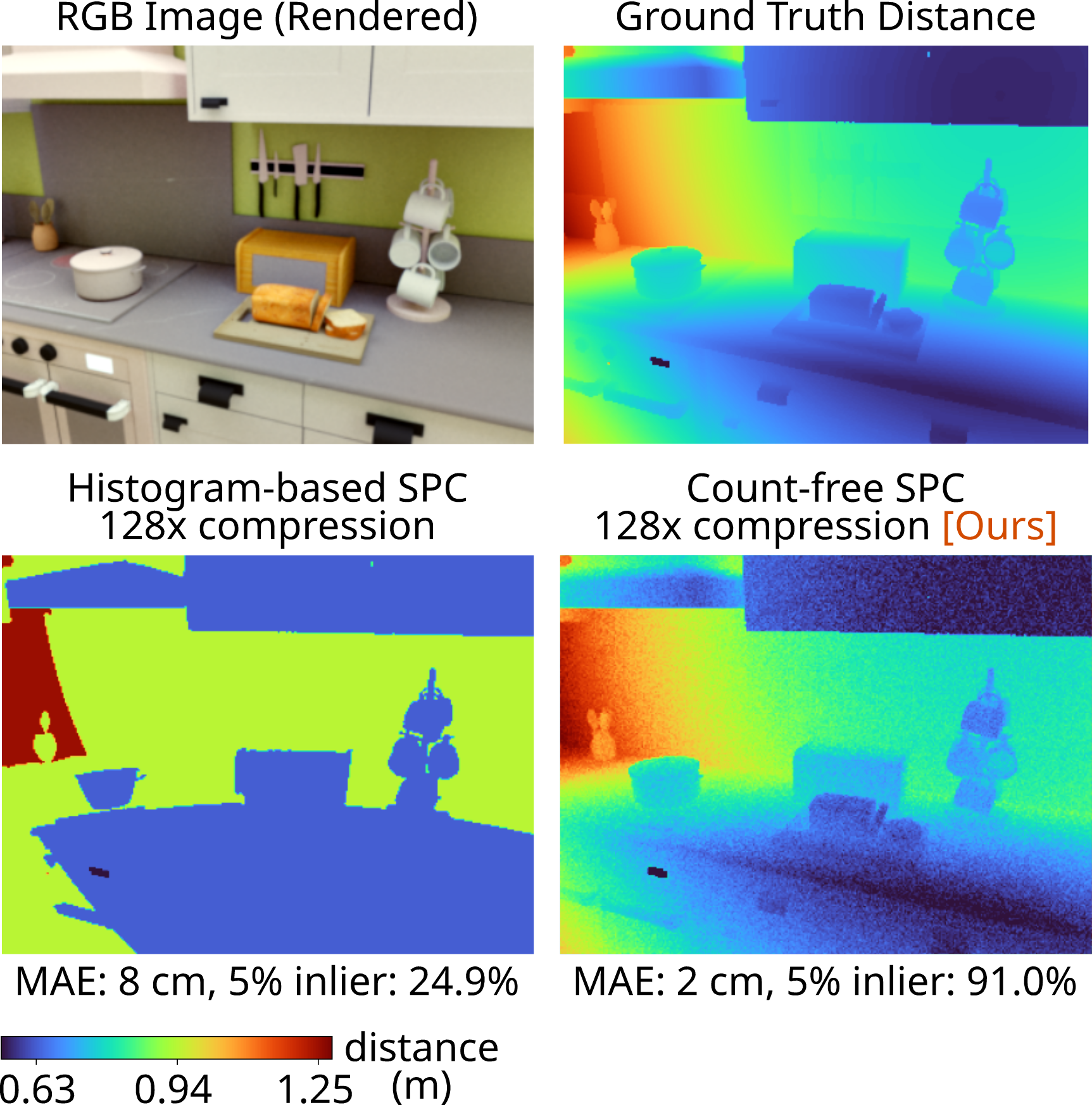}
\caption{\textbf{Comparison of direct time-of-flight 3D sensing techniques:} Single-photon 3D cameras enable higher spatial and distance resolutions compared to conventional analog LiDAR sensors that are based on low resolution arrays of photodiodes with spinning assemblies.
However, conventional histogram-based SPCs suffer from severe bandwidth bottlenecks that limit their frame rates.
Using coarsely binned histograms can reduce bandwidth requirement but at the cost of severe loss in distance resolution.
Our method is count-free and provides an on-sensor compressed representation suitable for distance estimation.
\label{tab:comparison}}
\vspace{-0.1in}
\end{figure}

\section{Introduction}\label{sec:introduction}

A wide range of computer vision applications---including industrial robotics, machine vision, autonomous driving, and augmented reality---need low-power 3D perception.
Image sensors capable of capturing single photons (e.g. single-photon avalanche diode (SPAD) sensors, high-gain avalanche photodiodes, and silicon photomultipliers) have gained popularity recently as detectors of choice for such applications.
SPAD-based 3D cameras have now found their way into commercial devices such as smartphone cameras \cite{Rangwala2020}, light detection and ranging (LiDAR) sensors for autonomous robotics \cite{Ouster2022} and cameras for scientific imaging \cite{Horiba}.
Due to their compatibility with CMOS fabrication technology, there is increasing availability of high (kilo-to-megapixel) resolution arrays of SPAD pixels with additional data processing embedded in the same hardware chip.
Unfortunately, their high sensitivity and speed is a two-edged sword: the amount of raw data generated by these sensors is orders of magnitude higher than can be reasonably processed or transferred in real time.
This aspect limits their applicability in many real-world applications, especially those that are power and bandwidth constrained.

\begin{figure*}[!ht]
  \centering
  \includegraphics[width=0.98\textwidth]{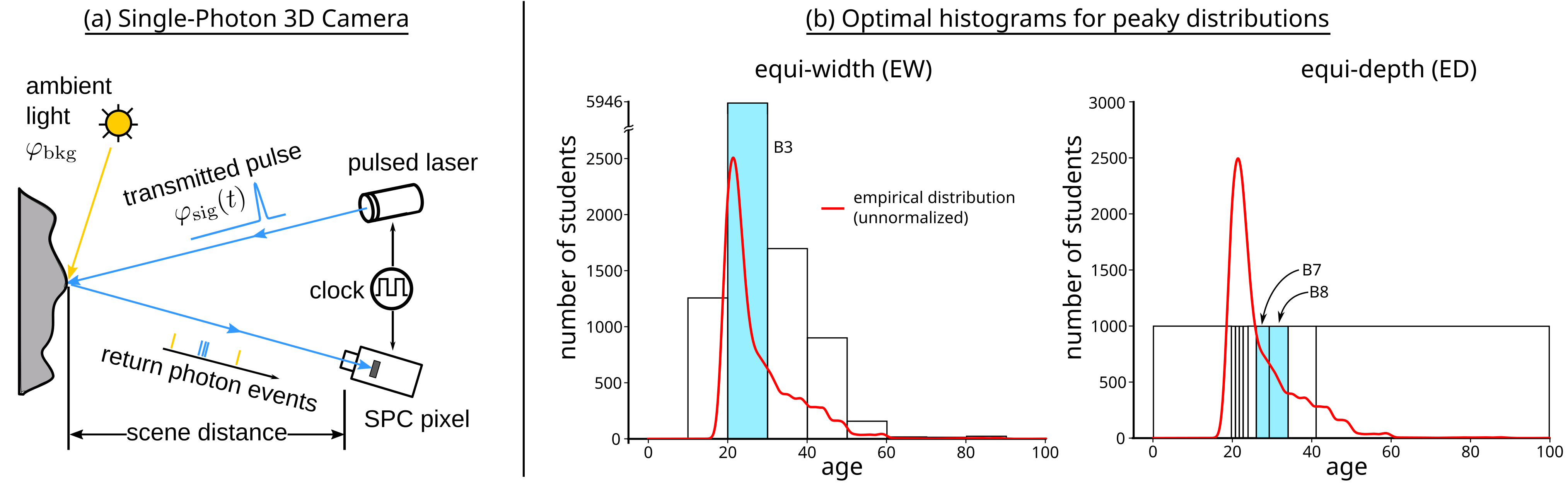}
  \caption{{\bf SPC image formation model and optimal histogramming for peaky distributions:}
  (a) An SPC captures a return stream of photon events at different time delays with respect to the transmission time of a laser pulse. 
  The photon stream contains both signal and ambient photons.
  (b) Data with peaky distributions are better summarized by an ED histogram because the narrow bins around the true peak location capture the shape of the peak more reliably. 
  Using bins B7 and B8 gives a more accurate estimate (543) of the number of students (554) in the age range 28 to 30 as opposed to using the single bin B3 (1189) in the EW histogram.
  \label{fig:intro_spc_edhist}}
  \vspace{-0.2in}
\end{figure*}

A single-photon camera (SPC) captures distance information using the direct time-of-flight (dToF) principle \cite{lange20003d}.
A pulsed laser source illuminates each scene point and the corresponding camera pixel captures a stream of \emph{return events} that represent the delays of the photon arrivals relative to the time the laser pulse was emitted.
These return events capture not only the returning laser photons but also spurious photons due to ambient light and other sources of noise.
Traditional methods estimate the time-varying distribution of light intensity, called the \emph{transient distribution}, by digitizing and storing photon return events in the form of an equi-width (EW) histogram where each histogram bin represents a fixed time interval, usually $\sim$10--100's of picoseconds.
The location of the peak in this EW histogram provides an estimate of the true distance of the scene point.
Although this EW histogram-based processing technique provides reliable distance estimates, time-to-digital conversion of individual photon delays and transferring these off-sensor for histogram formation consumes a large amount of power. 
It is infeasible to build high-resolution histograms on the image sensor due to the limited memory and computational resources available at each pixel. 
The inherent power- and bandwidth-hungry nature of histogram-based SPC data processing poses a significant hurdle to scaling this technology to higher pixel resolutions.


One way to compress EW histogram data is to use coarser histogram bins. However, as shown in Fig.~\ref{tab:comparison}, this approach causes severe quantization artifacts in the distance map.
We propose a radically different approach for direct time-of-flight imaging that enables high resolution 3D scene reconstruction while consuming orders of magnitude lower bandwidth. An example result achieving $>100\times$ compression is shown in Fig.~\ref{tab:comparison}.
Our technique relies on two key ingredients: (a) temporal encoding using race logic, and (b) equi-depth histograms.

\smallskip
\noindent\textbf{(a) Temporal encoding using race logic}:
Instead of attempting to digitize and store each individual photon delay using time-to-digital converters (TDCs) which often consumes a large fraction of the pixel area and total power, we perform as much of the processing as possible in the analog time-delay domain. 
To this end, we harness recent advances in the field of \emph{race logic}, where information is encoded not in the voltage levels of signals but in their precise arrival times \cite{Tzimpragos_2021}.
Temporal processing using race logic is naturally suited to single-photon time-of-flight 3D sensing because the arrival times of the photon-return events carry useful scene information (scene distance and reflectivity).
\rnote{Recent work \cite{zhang2022first,white2022differential} has shown the advantages of performing time-differential measurements instead of capturing
each photon arrival timestamps using time-to-digital converters (TDCs).
We avoid the need for TDCs by operating in the time-delay domain.}

\smallskip
\noindent\textbf{(b) Count-free equi-depth histograms}:
The second key ingredient of our method is to construct equi-depth (ED) histograms for summarizing the detected photon streams, rather than the conventional equi-width (EW) histograms that other methods employ (see Fig.~\ref{fig:intro_spc_edhist}(b)).\footnote{The word ``\textbf{depth}'' in equi-depth histograms should not be confused with scene point distances captured by a 3D camera.
In this paper, we will use the terms ``\textbf{distance}'' and ``\textbf{distance map}'' to avoid confusion.}
We present a technique that directly extracts ED histogram bin boundaries without explicitly storing photon counts, providing a parsimonious power- and bandwidth-efficient summary of the shape of the transient distribution.
Unlike an EW histogram, where most of the histogram bins are spent on storing ambient photons, our count-free ED histogram technique provides finer granularity around the peak of the transient distribution where signal photons arrive.

\rnote{
We combine these ingredients in a ``binner'' element (see Section \ref{sec:binners}) that tracks the median of a transient distribution.
Because of ambient light and asymmetry in the distribution, that median might be displaced from the peak of that distribution.
Therefore, we cascade multiple binners into a ``histogrammer'' that produces an ED histogram with multiple boundaries.
Our novel contribution is using race logic pre-processing to estimate equi-depth histograms in an online fashion while avoiding the large power and bandwidth requirements of current TDC-based designs.
}

\smallskip
\noindent \rnote{\textbf{Limitations:}
Although we do not show a complete hardware implementation of our proposed binner circuit, we believe the binner element is simple enough to fabricate on a sensor chip.
We note that a histogrammer will require a binner per ED histogram bin boundary which may increase circuit complexity.
However, if the required circuitry proves too extensive to include on a per-pixel basis, we can share it across pixels.
We show some possible pixel and pixel-array designs in Suppl.~Sec.~\ref{app:hardware}.
}



\section{Related Work}

\noindent\textbf{Single-photon 3D sensing:}
Almost all existing approaches for single-photon 3D imaging rely on explicitly constructing an EW histogram of photon arrival delays \cite{gyongy2021directTutorial} and extracting high-resolution distance information from them \cite{shin2016photon, rapp2021high, gupta2019photon, gupta2019asynchronous, patanwala2021high}.
Recent attempts at compressing this data rely on ideas from sketching \cite{sheehan2021sketching}, coding theory, and signal processing such as Fourier-domain compression \cite{gutierrez2022compressive}, random projections based on compressed-sensing theory \cite{colacco2012compressive}, and directly measuring a compressed representation such as a low-dimensional parametric model \cite{poisson20222}.
Hardware approaches involve on-chip resource-sharing schemes \cite{hutchings2019reconfigurable}, or two-step histogram capture (first acquire a coarse EW histogram, then zoom into a sub-region of interest \cite{lindner2018252, vornicu2019tof}).
The adaptive zooming strategy was also explored in a different context of distance resolution enhancement \cite{po2021adaptive}, where an SPC with a fast time-gate adaptively selects photons from a specific distance of interest.
Methods exist to compress EW histograms on the fly using a linear compression schemes that can be implemented in hardware.
However, on-sensor compression techniques often require additional in-pixel memory and compute to implement these compression matrices \cite{gutierrez2022compressive}.
Our work bypasses the need to explicitly form an EW histogram and directly generates a compressed representation of the full transient distribution, reducing the bandwidth and power requirements by at least an order of magnitude.
Our method does not store any coding matrices and can be implemented with minimal in-pixel memory and compute.


\smallskip
\noindent\textbf{Data-Stream and Time-Series Processing:}
Our work on compressing transient distributions is inspired by database literature on adaptive methods for characterizing data streams and time series.
Methods such as variable-window data-stream aggregation \cite{frames2016} and piecewise representations \cite{segmenting2001, shatkay1996approximate, himberg2001segmentation} can provide better task performance under constraints on the number of bins.
However, such methods are not suitable for on-line implementations because they make a complete pass over the entire history of data stream events.
In contrast, our method processes the photon delay stream on a per-laser-cycle basis, without explicitly storing the complete history of photon arrival events.
Scan statistics try to identify areas of above-average density in a dataset \cite{scanstats}, but these methods also suffer from limitations for large datasets \cite{preston-event-2009}.
Multiple authors have studied optimal histogram formulations over data streams, including approximate methods for histograms and related summaries with good time, space and error bounds \cite{guha-approx-2006, manku-approx-2002}.
However, applying these methods in our setting requires first converting delays to digital values, which we are trying to avoid.

\smallskip
\noindent\textbf{Race Logic:} Conventional digital logic encodes information in the form of specific voltage levels that denote binary digits (usually a low voltage for a 0 and a high voltage for a 1).
Instead of encoding information in the voltage levels, \emph{race logic} encodes information in the \emph{time of arrival} of a voltage-level change.
The term ``race'' originates from the idea of exploiting race conditions in digital circuits \cite{madhavan2015race}.
Signals encoded using their times of arrival can be treated as mathematical objects that lead to a  type of computational logic---different from conventional Boolean algebra---called \emph{space-time algebra} \cite{Smith2018}.
Since photon-arrival delays are the smallest unit of scene information captured by an SPC, the race logic paradigm is naturally suited to our problem.\footnote{The idea of using signal races has been used to implement Boolean functions \cite{lee2002race}.
In this work we use the term race logic in the sense of Madhavan et al. \cite{madhavan2014race} where the time delays themselves encode information.}
Recent work shows that standard digital circuits (e.g. \mbox{FPGAs}) can implement race logic and can provide orders-of-magnitude reduction in power while maintaining accuracy similar to digital counterparts \cite{Tzimpragos_2021}.


\section{Image Formation}\label{sec:binners}
Fig.~\ref{fig:intro_spc_edhist}(a) shows the image formation model for a single scene point using a dToF imaging system.
A pulsed laser sends out a short light pulse $s(t)$ with a repetition period $T$ into the scene.\footnote{A common choice for $s(t)$ is Gaussian pulse model with a known width (FWHM). This model can also include an instrument response function (IRF) $h(t)$ by convolving it with $s(t)$.}
Photons traveling at the speed of light $c$ bounce off the scene point located at a distance $d$ and are received by the corresponding pixel in the camera.
The same pixel also recieves a background signal $\varphi_\text{bkg}$, which we assume is a constant and does not vary with time.
The camera pixel receives a shifted and scaled version of this light pulse together with the background signal:
\begin{equation}
  \varphi(t) = \varphi_\text{sig}(t) + \varphi_\text{bkg}  \label{eq:trans_dist}
\end{equation}
where $\varphi(t)$ is periodic with period $T$, which is the spacing of the laser pulses.
Thus for $t > T$, $\varphi(t+nT) = \varphi(t)$ for $n \in \mathbf{Z}$.
For $0 \leq t < T$, $\varphi_\text{sig}(t) = \eta s(t-\tau_d)$.
The shift $\tau_d = 2d/c$ corresponds to the distance of the scene point and the unknown scaling $0<\eta<1$ accounts for scene reflectivity and signal loss terms due to imperfect optics and sensor quantum efficiency.\footnote{We assume that all scene points are located within a maximum distance range of $d_\text{max} = cT/2$.}
The ideal time-varying distribution of light intensity (Eq.~\ref{eq:trans_dist}) received by the camera pixel is called the transient distribution.
An SPC pixel captures a return stream of events consisting of photons that follow a periodic inhomogeneous Poisson process with rate $\varphi(t)$.
The total signal strength is given by $\Phi_\text{sig} \defeq \int_0^T \varphi_\text{sig}(t) \; dt$, the background strength is given by $\Phi_\text{bkg} \defeq \varphi_\text{bkg}T$, and the signal-to-background ratio is defined as $\textsf{SBR} \defeq \Phi_\text{sig} / \Phi_\text{bkg}$.
The background strength subsumes all sources of random noise that introduce unwanted counts (including ambient light and dark counts).

\smallskip
\noindent {\bf Conventional SPC (equi-width histograms):}
A conventional SPC pixel uses a time-to-digital converter (TDC) to digitize and aggregate photon arrival times into an EW histogram of photon counts.
Assuming a time bin resolution of $\Delta$, the histogram has $B \defeq \lceil T/\Delta \rceil$ bins.
Most practical systems today construct such histograms with $\sim$1000 time bins, where each bin stores photon counts as 8-bit unsigned integers.
EW histograms are constructed for each scene point by repeating the measurement process over many ($\sim$100--1000's) laser pulses.
Intuitively, the location of the peak of this measured histogram provides an estimate for the true distance.
In practice, distance is estimated by picking the ``arg max'' bin of this histogram or using techniques such as log-matched filtering in combination with statistical \cite{shin2016photon} or data-driven \cite{lindell2018single} spatio-temporal processing methods.

Although this approach provides high-quality distance maps, the process is resource hungry: Imagine a megapixel SPC with dedicated 1024-bin histogrammers in each pixel, where each histogram bin maintains a one-byte photon-count register.
Such a camera running at 30 fps would generate over 200 gigabits per second, an impractically large amount of data to transfer off the image sensor.
An alternative is to stream each digitized timestamp off the image sensor to a dedicated histogrammer circuit.
This alternative is undesirable because of the need to move large amount of timestamp data, which not only consumes power but also places a large bandwidth bottleneck at the image sensor.

What if, instead, we could capture distance information without explicitly capturing photon-count histograms?


\smallskip
\noindent {\bf Towards count-free histograms:}
The basic building block of our method is a \emph{binner circuit} shown in Fig.~\ref{fig:binner_ckt}(a) that probabilistically tracks a certain $k$-quantile (usually the median) of the transient distribution.
As shown in Fig.~\ref{fig:binner_ckt}(b), the binner consists of a reference signal (RS) that divides the incoming photon stream (SR) into early and late streams (SE and SL), and a control value (CV) that determines the duration of RS and which the binner adjusts during operation.
The reference signal drives a PASS-INHIBIT operation found in race logic \cite{Tzimpragos_2021,Smith2018}.
In each laser cycle, the binner circuit tracks the number of photons in SE and SR and and adjusts the control value to move the reference-signal boundary in the direction where more photons were received, with the objective of eventually having equal numbers of events on either side of this boundary.

The movement of the bin boundary is probabilistic and can be modeled as a random walk.
Suppose that the binner operates over a window length of $L$.
\rnote{We assume an underlying discrete time-grid; the binner's CV is updated by discrete step sizes of $\pm 1$.
In practice, the SPAD pixel's jitter limits the smallest possible step size to $\sim 100$~picoseconds \cite{sun2019simple,shimada2021back}.}
Let $E_i$ denote the CV after $i$ laser cycles have elapsed.
The number of photons in SE and SL are independent Poisson random variables (with possibly unequal means depending on the current CV location).
Hence, the transition probabilities $\mathsf{P}(E_i=k\pm 1 | E_i=k)$ are given by a Skellam distribution. (See Supplement \ref{app:markov} for details).
After sufficient laser cycles have elapsed, the limiting distribution of $E_i$ is the stationary distribution of this Markov chain.
The shape and spread of this limiting distribution can be computed numerically; it is a function of the true peak location, the signal strength and SBR.
Although we are not guaranteed movement towards the true median on every cycle, the mode of the stationary distribution (for realistic SBR levels) is at the overall median of the transient distribution (Fig.~\ref{fig:statn_dists}).\footnote{We use the terms ``median-tracking'' and ``quantile-tracking'' in this probabilistic sense.
The binner circuit does not ``lock on'' to the exact median of the transient distribution because it does not store the full history of photon arrivals and those arrivals are random in any given cycle.}

The following theoretical result shows that, with high probability, CV moves closer to the median on subsequent laser cycles if its current position is sufficiently far from the true median.
\begin{theorem*}
Let $\epsilon >0$ be an arbitrarily small probability threshold and suppose that CV $\neq$ true median. Then, for sufficiently large total photon rate (signal+background), the probability that CV does not move closer to the median on the next cycle is $< \epsilon$.
\end{theorem*}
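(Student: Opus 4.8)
The plan is to reduce the claim to a concentration statement about a single Skellam random variable and then apply an elementary tail bound. First I would fix notation: suppose without loss of generality that CV sits strictly to the left of the true median $\mu$ (the symmetric case is identical with the early and late streams interchanged). On a single laser cycle let $N_E$ and $N_L$ denote the photon counts falling in the early and late portions of the observation window; by the image-formation model of Section~\ref{sec:binners} these are independent Poisson variables with means $\lambda_E$ and $\lambda_L$. The binner advances CV toward $\mu$ precisely when the late count strictly exceeds the early count, so the bad event ``CV does not move closer to $\mu$'' is contained in $\{N_L - N_E \le 0\}$, irrespective of how ties are resolved. It therefore suffices to show $\mathsf{P}(N_L - N_E \le 0) < \epsilon$ once the total rate is large enough.

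The second step is to exhibit a strictly positive drift. The median is defined so that the two sides of the window carry equal expected mass; since CV lies left of $\mu$, the early region is a proper subset of $[0,\mu)$ and hence carries strictly less expected mass than the late region. I would write $\lambda_E = \Lambda p_E$ and $\lambda_L = \Lambda p_L$, where $\Lambda$ is the total expected windowed photon count (which grows linearly with the total signal-plus-background rate under a shape-preserving increase) and $p_E < p_L$ are fixed fractions determined only by the shape of the transient distribution and the position of CV. Crucially, the constant background term $\varphi_\text{bkg}>0$ guarantees a strictly positive density everywhere in the window, so whenever CV $\neq \mu$ the gap $\delta \defeq p_L - p_E$ is strictly positive and independent of $\Lambda$.

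The third step is the concentration estimate. The difference $D \defeq N_L - N_E$ follows a Skellam distribution with mean $\mathbb{E}[D] = \Lambda\,\delta$ and variance $\mathrm{Var}(D) = \lambda_E + \lambda_L = \Lambda(p_E + p_L)$. A direct application of Chebyshev's inequality then gives
\begin{equation}
  \mathsf{P}(D \le 0) \;\le\; \mathsf{P}\big(|D - \mathbb{E}[D]| \ge \Lambda\delta\big) \;\le\; \frac{\Lambda(p_E+p_L)}{\Lambda^2 \delta^2} \;=\; \frac{p_E + p_L}{\Lambda\,\delta^2}. \label{eq:cheb}
\end{equation}
Since $\delta$ and $p_E+p_L$ are fixed while $\Lambda \to \infty$ with the total photon rate, the right-hand side tends to $0$; taking $\Lambda > (p_E+p_L)/(\epsilon\,\delta^2)$ forces it below $\epsilon$, proving the theorem. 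If a quantitative rate were desired I would replace Chebyshev by a Chernoff bound on the Skellam tail, yielding decay exponential in $\Lambda$ rather than merely $1/\Lambda$.

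\textbf{Main obstacle.} The hard part is not the tail bound but justifying that the drift gap $\delta$ stays bounded away from zero. This is exactly where the nonzero constant background is used: it rules out the degenerate case in which the transient distribution places no mass strictly between CV and $\mu$, which would make $p_E = p_L$ and kill the drift. I would also need to state precisely how ``total photon rate'' maps onto $\Lambda$ (a uniform scaling of $\varphi(t)$ leaves the fractions $p_E,p_L$ invariant) and to dispose of the tie case $D=0$; since $\mathsf{P}(D=0)$ is itself $O(1/\sqrt{\Lambda})$ and is already subsumed by the bound in~\eqref{eq:cheb}, it does not affect the conclusion.
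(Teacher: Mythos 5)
Your argument is correct and follows the same skeleton as the paper's proof: assume without loss of generality that CV lies left of the median, observe that the one-cycle ``bad event'' is $\{N_E \geq N_L\}$ for independent Poisson counts with strictly ordered means, and bound that tail. The only substantive divergence is the final inequality. The paper applies a Chernoff bound directly to the Poisson difference, $\mathsf{P}(N_E \geq N_L) \leq e^{-(\sqrt{\lambda_E}-\sqrt{\lambda_L})^2}$, which yields the explicit quantitative threshold $\mu > \bigl(\tfrac{1}{\sqrt{f}-\sqrt{1-f}}\bigr)^2\log\tfrac{1}{\epsilon}$ on the total rate (with $f$ the mass fraction left of CV); the supplement's version of the theorem is stated with exactly this threshold and then instantiated numerically ($f=0.1$, $\epsilon=2\%$ requires $\mu > 9.8$). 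Your Chebyshev route is more elementary and proves the qualitative statement as written in the main text, but the required rate scales as $1/(\epsilon\,\delta^2)$ rather than $\log(1/\epsilon)$, so it cannot reproduce the paper's explicit bound or its worked example without the Chernoff upgrade you already flag as optional. Two smaller points in your favor: you are more careful than the paper about why the drift gap $\delta$ is bounded away from zero (the strictly positive background density), and about folding the tie event $N_E = N_L$ into the bad event rather than treating it separately --- both of which the paper handles only implicitly.
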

\noindent The proof uses a Chernoff bound argument for the difference between two independent Poisson random variables. See Supplement \ref{app:markov} for details.

A key feature of the binner approach is that it is \emph{count-free}.
Therefore our theoretical guarantee is weaker than previously known results for streaming median estimation \cite{guha2009stream, munro1980selection}, which require maintaining some history of past photon timestamps.
We do not store the history of photon counts across cycles to form a histogram.
The binner updates its CV immediately and locally in each laser cycle based only on photons received in that cycle.
Each pixel need only maintain its CV, providing a large reduction in data requirements.
While our simulations represent the CV as a register, our method is also compatible with purely analog implementations without in-pixel time-to-digital conversion, using, for example, a comparator that outputs an analog quantity proportional to difference in the number of photons in SE and SL. 
Either approach allows us to adjust the CV on a finer scale than what is afforded by the fixed-size bins of an EW histogram.

A median-tracking binner generates a two-bin equi-depth histogram where the CV corresponds to the estimated location of boundary between the two bins.
In an ideal scenario of extremely high SBR (negligible background strength), the median will closely track the true peak, and we can estimate distance by simply reading out and digitizing CV.
More generally, a quantile-tracking binner circuit can track other quantiles of the transient distribution, by changing its increment-decrement logic.
For example, making CV increments $3\times$ as large as decrements, the binner can track the $75^\text{th}$ percentile of the transient distribution.

\smallskip
\noindent\textbf{Effect of Photon Pileup:}
\rnote{A SPAD pixel cannot capture incident photons that arrive too close to each other because it needs time to reset after each photon detection.
This effect is called the \emph{dead-time} and is on the order of $10$-\SI{100}{\nano\second}.
In case of extremely high incident photon flux, the measurements are biased towards early arriving photons, which results in pileup distortions.
Existing computational pileup correction techniques that operate on EW photon histograms \cite{coates1968correction,pediredla2018signal} cannot be used for the output of a binner because it does not keep track of the full history of photon counts.
However, our method is compatible with existing optical and hardware approaches  \cite{beer2018background,gupta2019asynchronous} that mitigate both ambient- and signal-induced pileup during acquisition.
}

\begin{figure}[!ht]
  \centering
  \includegraphics[width=0.80\linewidth]{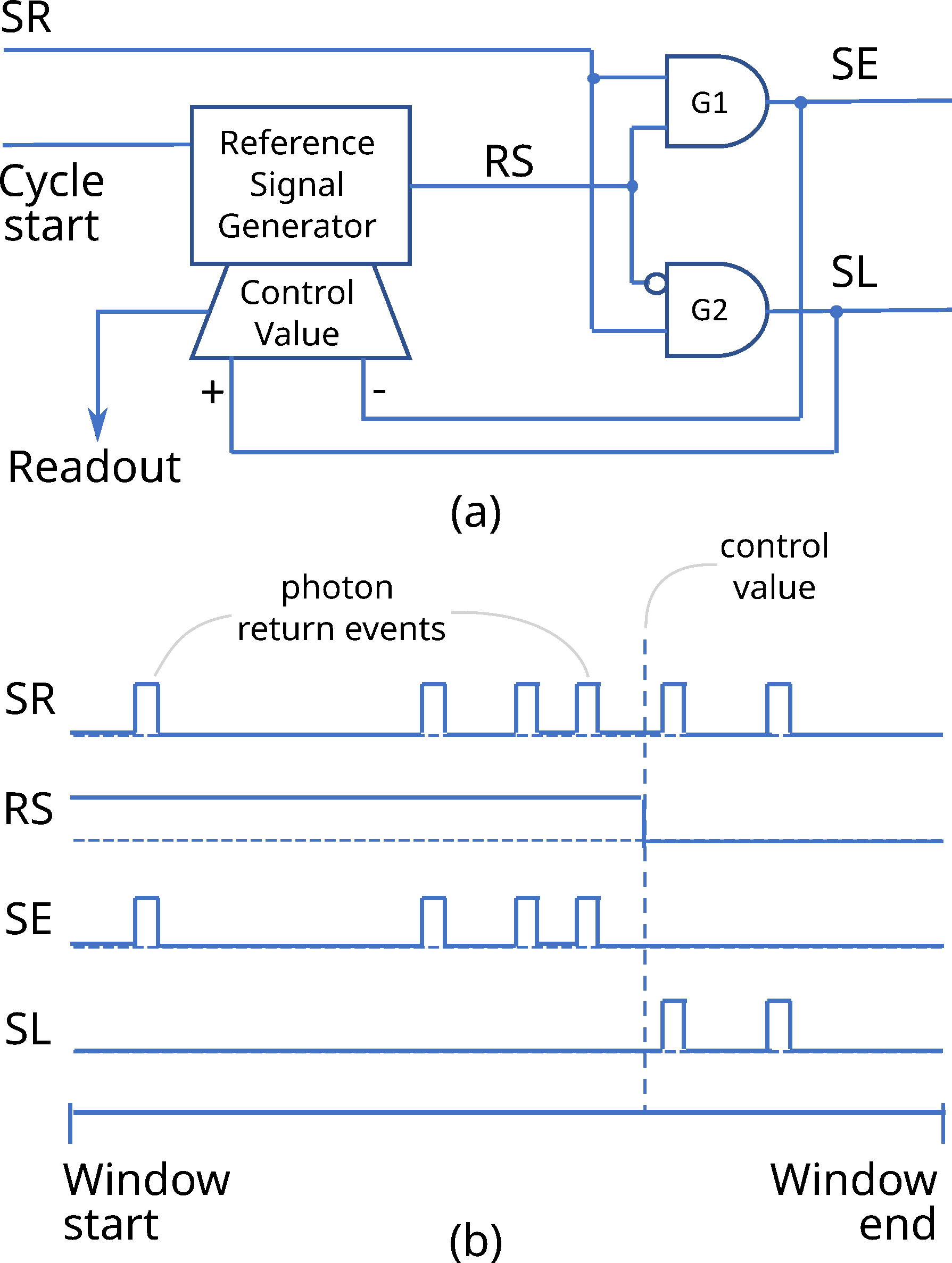}
  \caption{ {\bf
      Proposed ``binner'' circuit tracks the median (or other quantiles) of a transient distribution.}
    (a) The basic binner circuit receives a photon event stream as its input which it splits into early and late events.
  The difference between the two sides is used to update a control value that converges towards the true median over multiple laser cycles.
(b) An example timeline showing the photon stream that is split into early and late streams using a race logic-inspired reference signal that applies an INHIBIT-PASS operation around the current control value. \label{fig:binner_ckt}}
\end{figure}

\section{Analysis of a Median-Tracking Binner}
In this section we explore binner behavior---convergence, bias and accuracy---via simulation and formal models.
Our simulation produces cycles of photon return events sampled from an underlying transient distribution.
(In this section, the transient distribution is based on a Gaussian pulse plus random ambient light; in later sections we also use transient distributions generated from 3-D scene models.)
For each cycle, we sample this distribution using an inhomogeneous Poisson process that simulates shot noise, dark counts and quantum efficiency of a real-world SPAD pixel.
For our simulation, the range, $B$, is 1000 units and we adjust the CV in steps of size 1.
We also use a Markov-chain model, detailed in Supplement \ref{app:markov}.

\begin{figure}[!ht]
    \centering
    \includegraphics[width=0.95\linewidth]{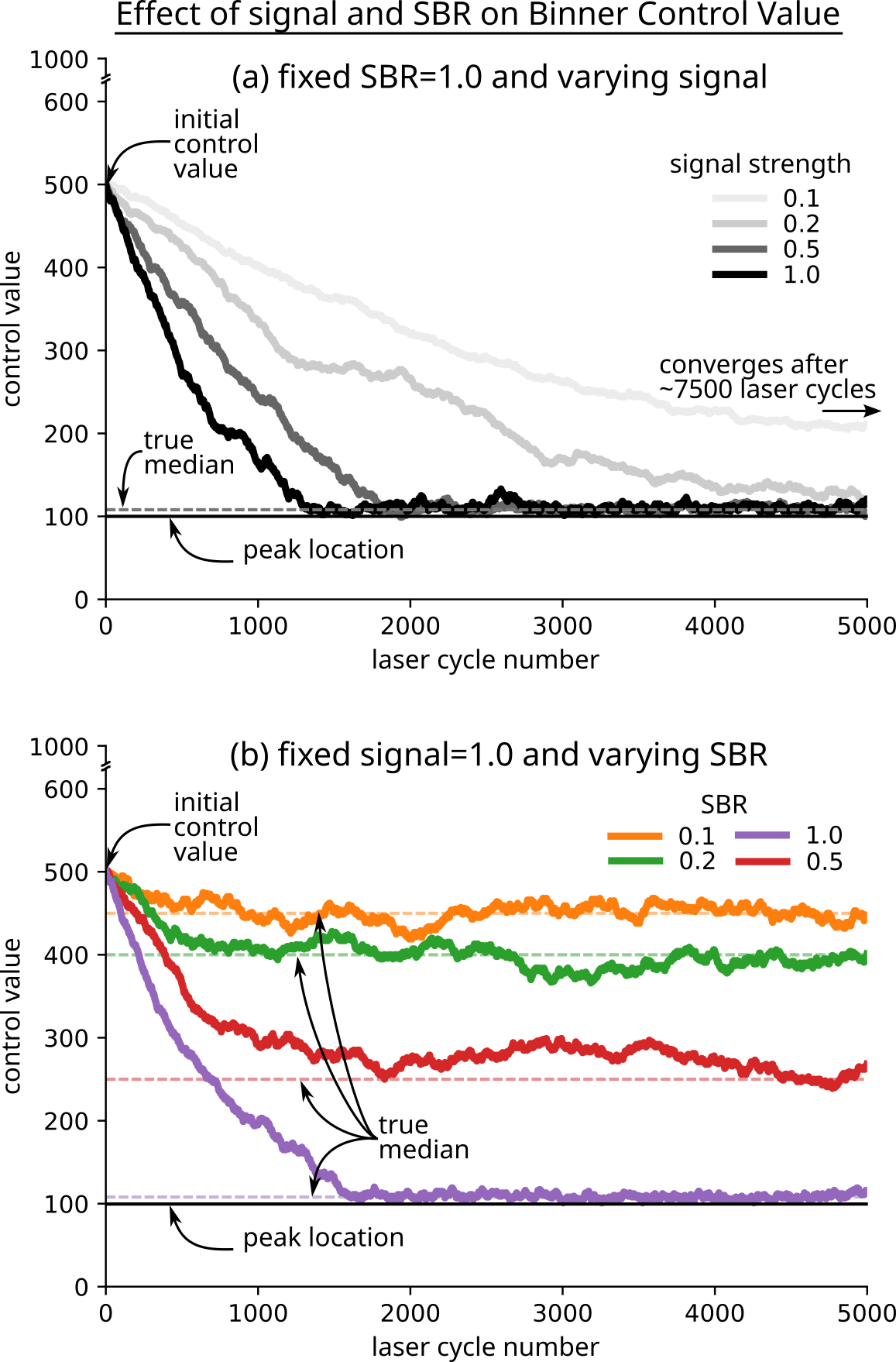}
    \caption{{\bf Simulated trajectories of a binner's control value for varying signal strength and SBR.}
    We plot the evolution of a binner's control value as a function of laser cycle number under different signal and SBR conditions. The ground truth transient consists of a laser pulse with FWHM $\SI{2}{\nano\second}$, located at $100$ with $B=1000$. The binner step size is set to $\pm 1$. (a) 
    For fixed SBR, convergence is faster for higher signal strengths. (b) For fixed signal strength, as SBR increases, the true median location moves closer to the true peak location.}
    \label{fig:example_binner_runs}
\end{figure}

\noindent \textbf{Binner convergence:}
The CV is an estimate of the true median of the transient distribution (including background).
If CV is away from true median, then more photons will likely arrive on the side towards true median, so it is more likely that CV will move towards the true median than away.
Intuitively, in the absence of noise, the binner's CV must settle at the location where the numbers of events in the SE and SL streams (Fig. 4) are---on average---equal, which is the median.
Once CV reaches true median, it will fluctuate around it.

The speed of binner convergence depends on the signal strength and SBR.
Fig.~\ref{fig:example_binner_runs}(a) shows the CV for example runs of a simulated binner over multiple laser cycles with fixed SBR=1.0 and varying signal strength from low to high.
We see in general that convergence is faster with higher signal strength (since a cycle is less likely to have balanced or no photons), but is slowed by higher background levels of (since there is a greater chance of a move in the wrong direction).
These runs used a fixed step size of $\pm 1$.
See Supplement \ref{app:convergence} for initial results with stepping strategies that can help speed up convergence.

\smallskip
\noindent \textbf{Bias due to ambient light:}
In low ambient light, a binner's CV moves towards the location of the peak of the transient distribution and eventually settles  at or near it.
However, with strong ambient light, the final boundary can be farther away from the peak because the median of the transient distribution depends on not only the signal but also the background.
The stronger the background, and the farther a peak is from the midpoint of the range, the larger the bias.
Fig.~\ref{fig:example_binner_runs}(b) shows example runs of a binner under fixed signal and varying SBR.
The shift in the true median is apparent.
At high SBR, the median is close to the peak position, and ``wandering'' is more constrained.
As the background increases (SBR decreases), the median shifts from the peak position towards the midpoint ($500$) of the total range.
While it might be possible to adjust for bias if signal strength and background are known, in Sec.~\ref{sec:edh} we present a more robust approach to this issue. 

\smallskip
\noindent \textbf{Binner accuracy:}
We saw that once a binner reaches the vicinity of the true median, its CV can continue to ``wander'' around that position.
Thus, even after convergence, there is still a range of possible values we might obtain on read-out of the CV.
Several factors influence how much deviation we expect to see from the true median, such as signal strength, SBR and the position of the signal peak.
Supplement \ref{app:markov} provides a Markov-chain model of binner behavior, which we have used to study convergence and accuracy.
While we do not have a closed-form solution for the stationary probability distribution (SPD) of this model, we can derive SPDs numerically for different combinations of signal strength, background level and peak position.
These SPDs provide insight on the degree of ``wandering'' of the CV under different conditions.
Fig.~\ref{fig:statn_dists} shows a variety of SPDs for different signal-peak positions (100, 250, 400) and SBRs (1.0, 0.5, 0.2, 0.01).
For each SPD, the gray line indicates the true median of the corresponding transient distribution.
Some observations:
\begin{myitem}
    \item The modes of the SPDs are always at the true medians.
    \item The true median is closer to the range midpoint with decreasing SBR.
    \item The spread of the SPD around the true median is greatest for intermediate SBRs.
    At those ratios, the transition probabilities for the CV roughly balance to the left and right.
\end{myitem}
We quantify the spread for these SPDs in Table \ref{tab:statn_dist_prob}, which shows the probability of the CV being within 5, 10 and 20 units of the true median (after convergence).
One means to cope with the ``spread'' of CV values is to read out the CV multiple times.
We evaluate that strategy in Section \ref{sec:edh}.

\begin{figure}[!t]
  \centering
  \includegraphics[width=0.99\linewidth]{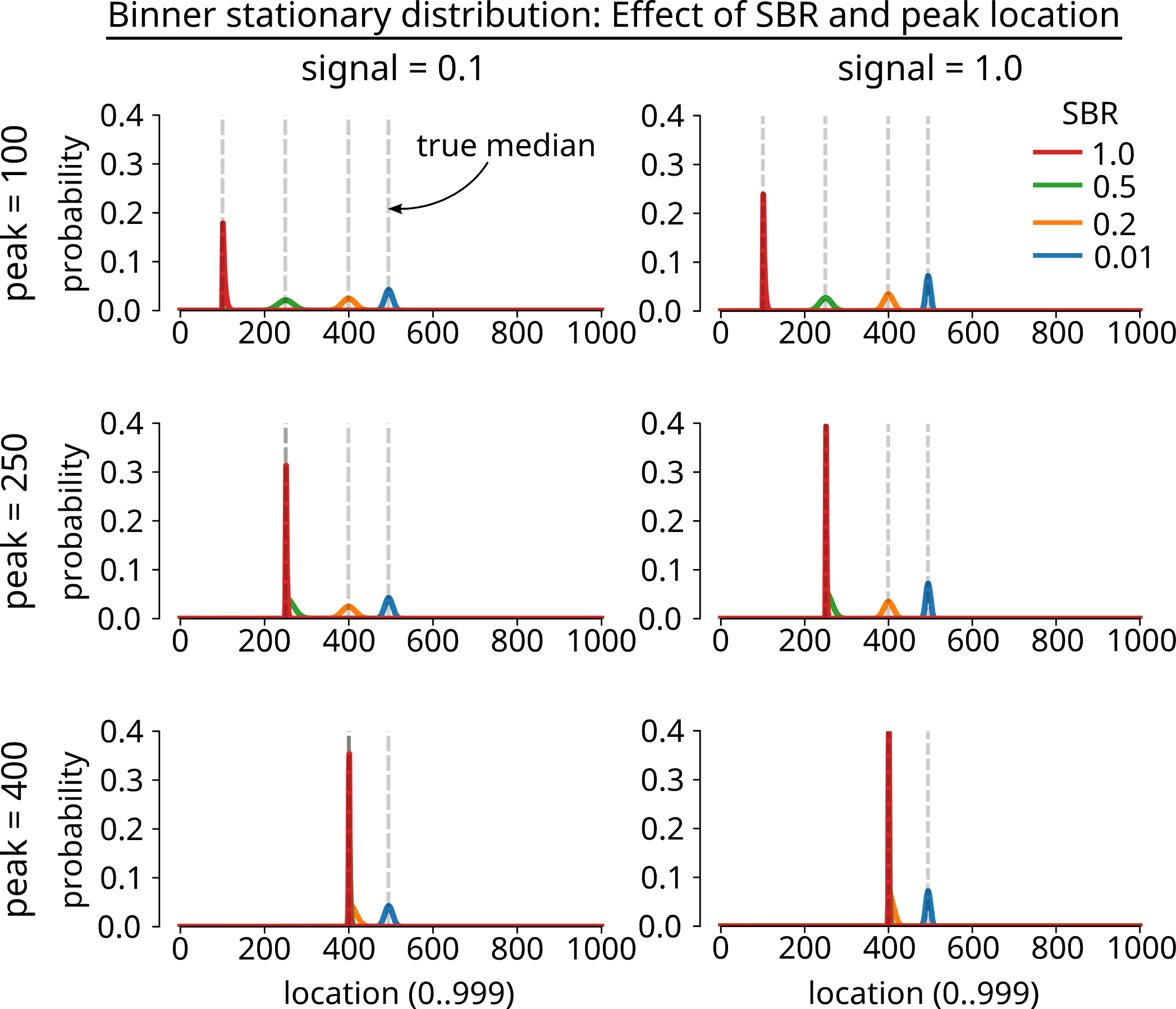}
  \caption{ {\bf
    Stationary distributions of a binner's control value at different signal and background levels and varying peak locations as predicted by the Markov chain model.}
    The median of the transient distribution is closer to the true peak location at high SBRs and closer to the midpoint of the overall range as the background increases.
  Observe that the mode of the stationary distribution is always aligned with the true median location which suggests that the control value probabilistically tracks the true median. The spread depends on the peak location and SBR.
\label{fig:statn_dists}}
\end{figure}

\begin{table}[!ht]
\footnotesize
\addtolength{\tabcolsep}{-1.5pt}
\begin{center}
\begin{tabular}{@{}cccccc@{}}
\toprule
                                                        & \multicolumn{1}{l}{}                                   & \multicolumn{4}{c}{\% probability $\pm 5/\pm 10/\pm 20$ from median}                                                                                                                                                              \\ \cmidrule(l){3-6} 
\begin{tabular}[c]{@{}c@{}}peak\\ location\end{tabular} & \begin{tabular}[c]{@{}c@{}}signal\\ level\end{tabular} & \begin{tabular}[c]{@{}c@{}}SBR \\ 0.01\end{tabular} & \begin{tabular}[c]{@{}c@{}}SBR \\ 0.2\end{tabular} & \begin{tabular}[c]{@{}c@{}}SBR \\ 0.5\end{tabular} & \begin{tabular}[c]{@{}c@{}}SBR \\ 1.0\end{tabular} \\ \midrule
\multirow{2}{*}{100}                                    & 0.1                                                    & 40/71/97                                            & 24/46/78                                           & 21/41/72                                           & 64/89/99                                           \\
                                                        & 1.0                                                    & 63/93/100                                           & 34/62/92                                           & 27/50/83                                           & 76/95/100                                          \\ \midrule
\multirow{2}{*}{250}                                    & 0.1                                                    & 40/71/97                                            & 24/46/78                                           & 20/40/71                                           & 92/99/100                                          \\
                                                        & 1.0                                                    & 63/93/100                                           & 34/62/92                                           & 24/49/82                                           & 97/100/100                                         \\ \midrule
\multirow{2}{*}{400}                                    & 0.1                                                    & 40/71/97                                            & 25/47/79                                           & 89/99/100                                          & 98/100/100                                         \\
                                                        & 1.0                                                    & 63/93/100                                           & 32/61/92                                           & 96/100/100                                         & 100/100/100                                        \\ \bottomrule
\end{tabular}
\end{center}

\caption{Probability of the binner control value is within $\pm 5/\pm 10/\pm 20$ units of the true median location after Markov chain convergence at two different signal strength of $0.1$ and $1.0$ and varying SBRs.
\label{tab:statn_dist_prob}
}

\end{table}

A single binner might not provide a good estimate of the peak of the transient distribution, because of bias and greater ``wandering'' resulting from ambient light.
Asymmetric and multiple peaks also present challenges.
The next section describes how a cascade of binner circuits can capture multi-bin equi-depth histograms of the transient distribution.
We will see that the adaptive nature of ED histograms can cope with bias and increase SBR in the critical region of the signal.
We show that it is possible to obtain accurate distance estimates with as few as 8 or 16 ED histogram bins due to their adaptive nature and ability to reliably focus attention around the true peak, as shown intuitively in Fig.~\ref{fig:intro_spc_edhist}(b).

\begin{figure*}[!ht]
  \centering
  \includegraphics[width=1.00\textwidth]{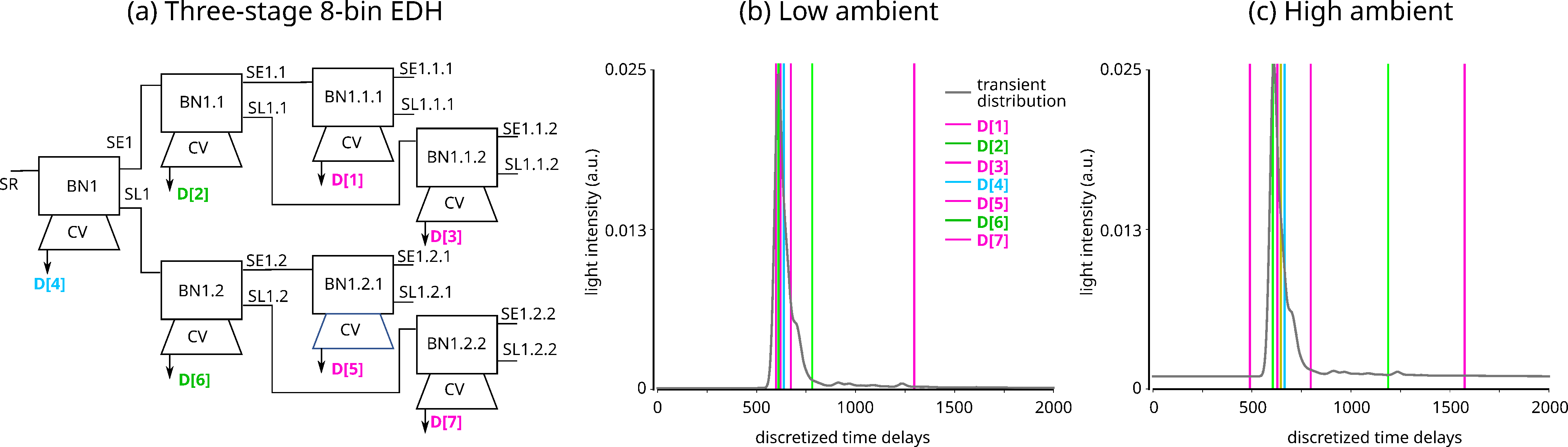}
  \caption{ {\bf Equi-depth histogrammer consisting of a tree of binners.}
    (a) Binners shown in Fig.~\ref{fig:binner_ckt} can be cascaded into a 3-stage binary tree.
    An inorder traversal of this tree gives the 7 ED bin boundaries $D[1..7]$.
    The binner outputs of an 8-bin (7-boundary) EDH for a simulated transient distribution are shown in (b) and (c).
    (b) In the presence of low ambient light, most of the ED bin boundaries cluster at the true peak location in this simulated distribution.
    (c) In case of higher ambient light, the bin boundaries are spaced farther apart with additional bins absorbing the ambient light, but the narrowest ED bins still cluster around the true peak location.
  \label{fig:EDH_and_ambient}}
  \vspace{-0.1in}
\end{figure*}

\section{Count-Free Equi-depth Histogrammer\label{sec:edh}}
This section describes a method for cascading multiple binner circuits to capture a multi-bin ED histogram.
We call this implementation an \emph{equi-depth histogrammer} (EDH).
In this binary-tree arrangement of binners, the early and late streams from each binner are fed into two binners in the next stage of the tree.
For illustration, Fig.~\ref{fig:EDH_and_ambient} shows a 3-stage EDH (that produces eight bins) and some example outputs.
The seven bin boundaries $D[1..7]$ correspond to an inorder traversal of the EDH tree outputs.

\smallskip
\noindent {\bf Robustness to ambient light:}
An advantage of the tree-based approach is that binner stages at deeper levels in the tree that are at or near the true peak location are not significantly affected by the presence of ambient photons outside the sub-windows that are enforced by the higher levels in the tree.
This property not only provides ambient-light rejection, but also optimizes readout bandwidth.
Under realistic SBR levels, upper binner stages of an EDH partition off background light, while most of the binners automatically and adaptively focus on the interesting regions of the transient distribution where the peak is located (Fig.~\ref{fig:EDH_and_ambient}(c)).
In contrast, conventional EW histograms allocate equal number of bins throughout the measurement window.
When using an EW histogram in the presence of strong ambient light, a significant fraction of the total readout bandwidth and energy is spent on digitizing and transferring background photons. 
Although this aspect could be improved by focusing EW histogramming resources around the true peak location, \emph{a priori} knowledge of the peak location may not be available.
An EDH focuses its bins around the true signal peak without prior knowledge of its location.

The adaptive nature of EDHs also helps limit CV ``spread'' at low SBRs.
A narrow bin around a signal peak will have a higher SBR than the overall ratio. (``Background'' bins will have lowered SBR.)
As a notional example, consider an EDH with a window width of $B = 1000$ where both signal and background levels are at 2.0, so SBR is 1.
Consider a stage-4 binner whose input is a narrow bin from its stage-3 parent.
Suppose that bin is 50 units wide.
At stage 3, there are 8 bins, and the total level is 4.0, so each bin is expected to have 4.0/8 = 0.5 units.
Since the background level is uniform across the whole range, the background contributes $(50/1000) \cdot 2.0 = 0.1$ units to the bin.
Thus, the other $0.5 - 0.1 = 0.4$ units must be signal, hence the SBR for the bin is 4.
As another example, if the stage-3 bin is 20 units wide, we will have 0.04 units of background in the bin and 0.46 units of signal, for an SBR of 0.46/0.04 = 11.5.
Hence we can expect less variance in boundary positions for the narrow bins around the signal peaks as we go down the EDH tree.


At this point it is natural to wonder:
How should the different stages of an EDH be initialized?
How many laser cycles does it take for the different stages to converge? 
How do we estimate scene distance from the ED histogram bin boundaries obtained from an EDH?
And finally, how do different signal and background strengths affect performance?
We now analyze practical design aspects of an EDH, then turn to performance evaluation in Sec.~\ref{results}.

\smallskip
\noindent{\bf EDH initialization and updates:} 
Assuming no \emph{a priori} information about the true peak location or the shape of the transient distribution, at the outset, we initialize the CV of the first stage BN1 at half of the maximum distance range. 
After a certain number of laser cycles have elapsed, we freeze BN1 and initialize the next stage of binners BN1.1 and BN1.2 at the midpoints of their respective sub-ranges and then adjust their CVs over subsequent laser cycles.
For example, if binner BN1 has a CV $D[4]$ when frozen, we initialize BN1.1 and BN1.2 to $D[4]/2$ and $(T+D[4])/2$, respectively, where $T$ is the overall window size (laser cycle duration).
In general, we launch binners at stage $i+1$ after freezing the binners at stage $i$, initializing each of the former at the midpoint of the range prescribed by the binner feeding it from the previous stage.
\rnote{In our simulations, we typically run a 4-stage EDH for 5000 cycles; each stage runs for 1250 cycles.}

The CV of a given binner are adjustable on a finer time scale than for TDC-based systems, since we do not digitize photon arrivals.
When seen from the perspective of the bin locations of a high-resolution EW histogram, the EDH control values can span ``fractional'' bins and lie between two bin edges.
Although a fundamental limit is imposed by the timing jitter of the SPC pixel ($\sim$$\SI{30}{\pico\second}$ for SPADs), we can allow the control value to adjust on an even smaller timescale to average out the effect of this jitter.

\smallskip
\noindent{\bf EDH convergence:}
The Markov chain analysis of convergence of a single stage binner (Supplement \ref{app:markov}) can be extended to each stage of the multi-binner EDH.
Convergence is probabilistic.
While we expect a binner's CV to generally move towards the true median of its sub-range, on any given laser cycle the photon events might split exactly around CV, or may even be higher on the side away from the median.
So there will not necessarily be progress on every cycle.
Stronger peaks lead to faster convergence.
A stronger signal peak means more return events expected in a cycle, meaning they are more likely to to reflect the distribution, and hence result in a step in the correct direction for CV.
Higher background levels will slow convergence.
It is possible to accelerate convergence towards the median by using a larger adjustment to the binner CV in the initial laser cycles and gradually reducing the step size towards the end.

\smallskip
\noindent{\bf Distance estimation:}
The width of each bin in an ED histogram is, by definition, inversely proportional to the local density of the underlying transient distribution.
Assuming that the transient distribution has exactly one sharp peak, distance of the scene point can be estimated by simply locating the narrowest ED bin (highlighted in red in Fig.~\ref{fig:argmax_curvefit}).

\begin{figure}
    \centering
    \includegraphics[width=0.7\linewidth]{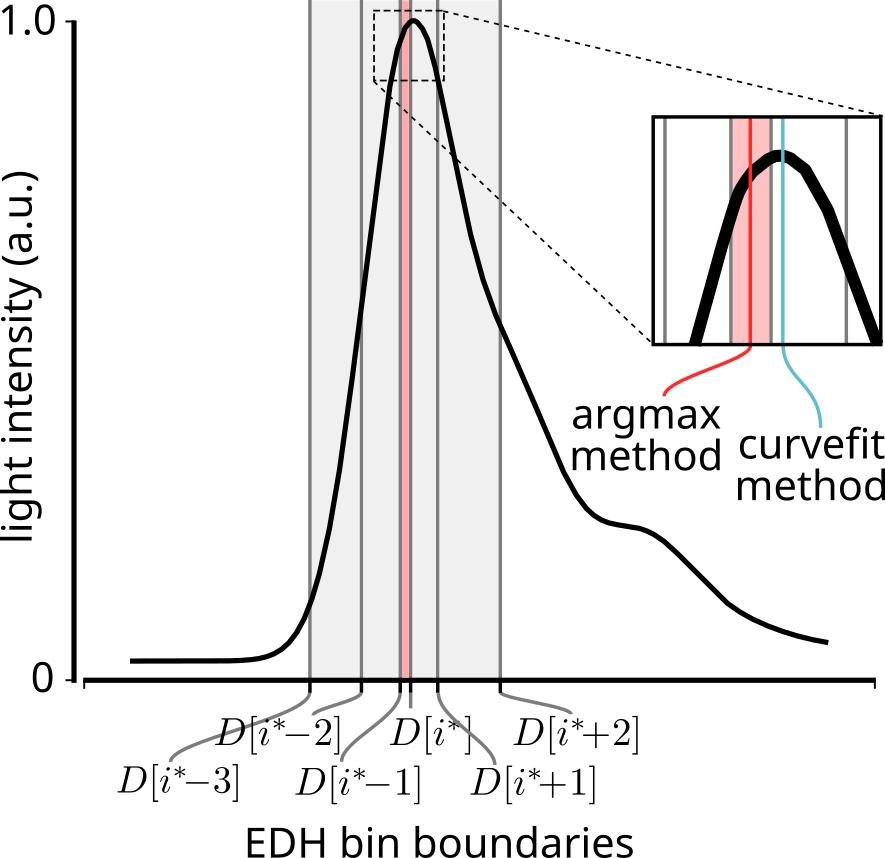}
    \caption{{\bf Two methods for distance estimation:}
    The ``argmax'' method simply locates the narrowest bin and picks the midpoint as the distance estimate.
    Note that this choice may not always locate the true peak (as in this case) when a boundary splits the peak.
    A curve-fitting approach uses the widths of multiple bins in the local neighborhood of the narrowest bin.
    \rnote{Note that these computations occur off-sensor, after the ED bin boundaries have been read out.}
  }
    \label{fig:argmax_curvefit}
\end{figure}

Suppose the tree-based EDH consists $K$ stages giving $2^K-1$ bin boundaries $D[1,2,3, \ldots, 2^K-1]$.
The first and last bin edges are, by definition, located at the extreme ends of the time window, i.e., $D[0]=0$ and $D[2^K]=T$.
Let $i^* = \arg \max_{1 \leq i \leq 2^K} \frac{1}{D[i]-D[i-1]}$ be the right edge of the narrowest bin.
The ``argmax'' distance estimator is the midpoint of the narrowest ED bin:
\begin{equation}
\widehat{d}_\text{argmax} \defeq \frac{c}{4} \left(D[i^*]-D[i^*-1]\right) \label{eq:argmax_estimator}
\end{equation}
where $c$ is the speed of light. 
We can extend this method of distance estimation to handle multiple peaks (say due to multiple reflections, or presence of semi-transparent materials along the viewing direction) by replacing the argmax operation by a more general peak finding routine which may return locations of all ``locally narrow'' bins (those narrower than adjacent bins).

Although a locally narrow bin provides a first order estimate of the location of the true signal peak, there are situations where two ED bins split the peak in a way that the midpoint of the narrowest bin gives a biased estimate of the peak location.
This effect may be exacerbated in real-world settings where the laser peak is often not symmetric and has a sharper leading edge and a longer trailing edge.
To handle such cases, we also use a quadratic curve fitting method that provides finer estimates of the peak location.
Denoting $x_i = (D[i]-D[i-1])/2$ and $y_i = \frac{1}{D[i]-D[i-1]}$, we fit a quadratic $y=\alpha x^2 + \beta x + \gamma$ using the $(x_i, y_i)$ pairs for ED histogram bin indexes surrounding the narrowest bin.
The number of bins on either side of the narrowest bin is chosen adaptively to lie within one-standard deviation of all the ED bin widths.
We have found that this usually results in a subset of the bins $\{i^*-2, i^*-1,i^*, i^*+1, i^*+2\}$ being chosen for curve fitting (shaded gray in Fig.~\ref{fig:argmax_curvefit}).
The scene distance is estimated using: 
\begin{equation}
  \widehat{d}_\text{curvefit} = -\frac{c\beta}{4\alpha}.
\end{equation}
\rnote{Other more sophisticated distance estimation routines that use ED histogram boundaries as inputs are also possible.
Due to in-pixel memory and compute constraints, we envision these will be implemented in an off-sensor compute module.}

The next section provides simulation results for distance estimation in various signal and background levels, and distance-map reconstructions of rendered scenes. 

\section{Results}\label{results}
We evaluate our EDH-based method and compare it against conventional EW histogramming methods using single-pixel simulations and a transient rendering dataset.
The single-pixel simulations cover a range of signal/background strengths and ground truth distances.
The rendered scenes provide greater variety in the shapes of the transient distributions due to multi-path effects.
Finally, we test the method with real-world hardware data.

\begin{figure}
\centering
\includegraphics[width=0.95\linewidth]{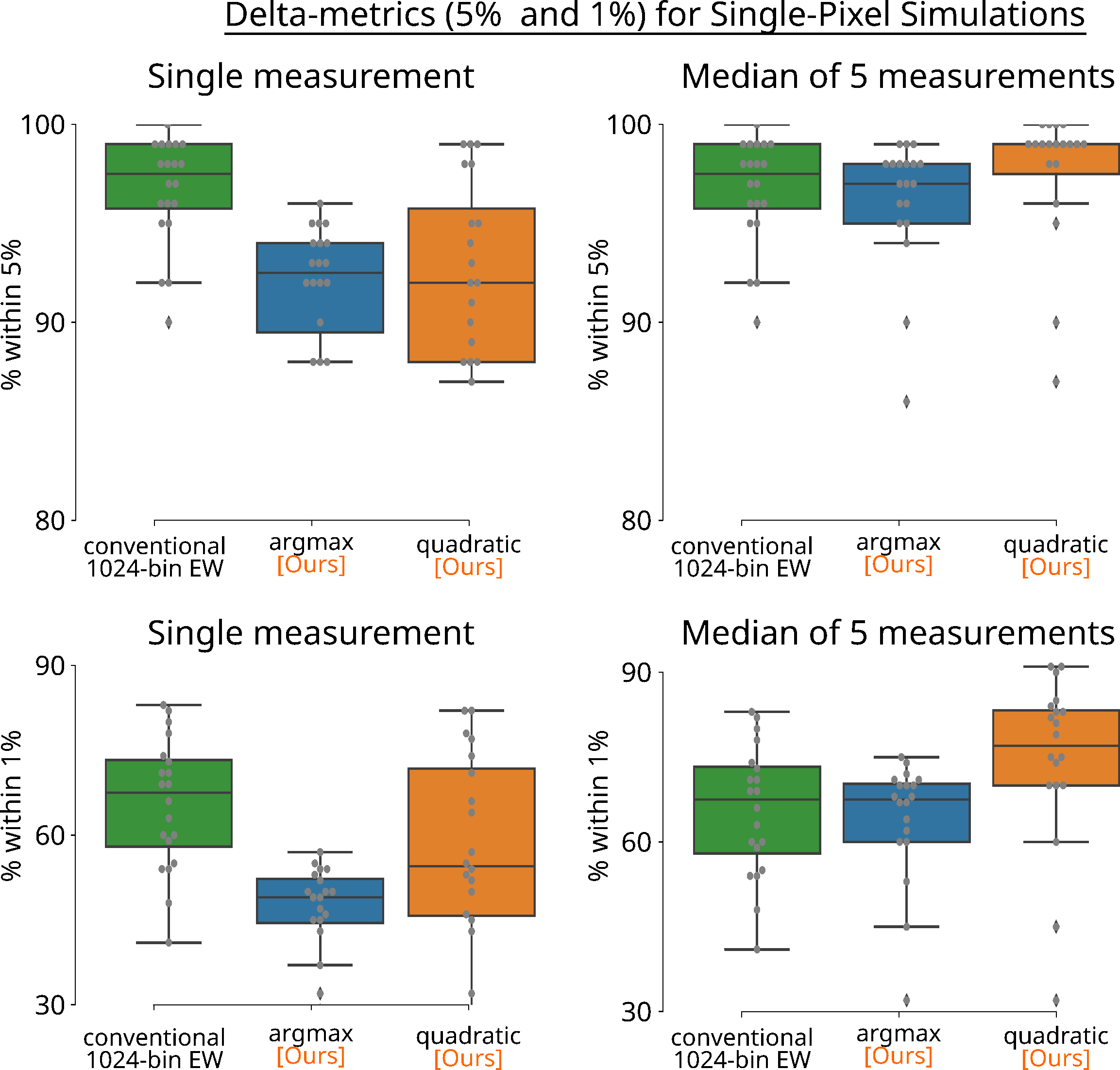}
\caption{ {\bf Single-pixel Monte Carlo simulation results.} These boxplots aggregate information over the range of signal and background strengths for a laser pulse width of $\SI{5}{\nano\second}$ with 100 randomly chosen ground-truth distance values.
\label{fig:singlepixel_montecarlo}}
\end{figure}

\begin{figure*}[!ht]
  \centering
  \includegraphics[width=0.98\textwidth]{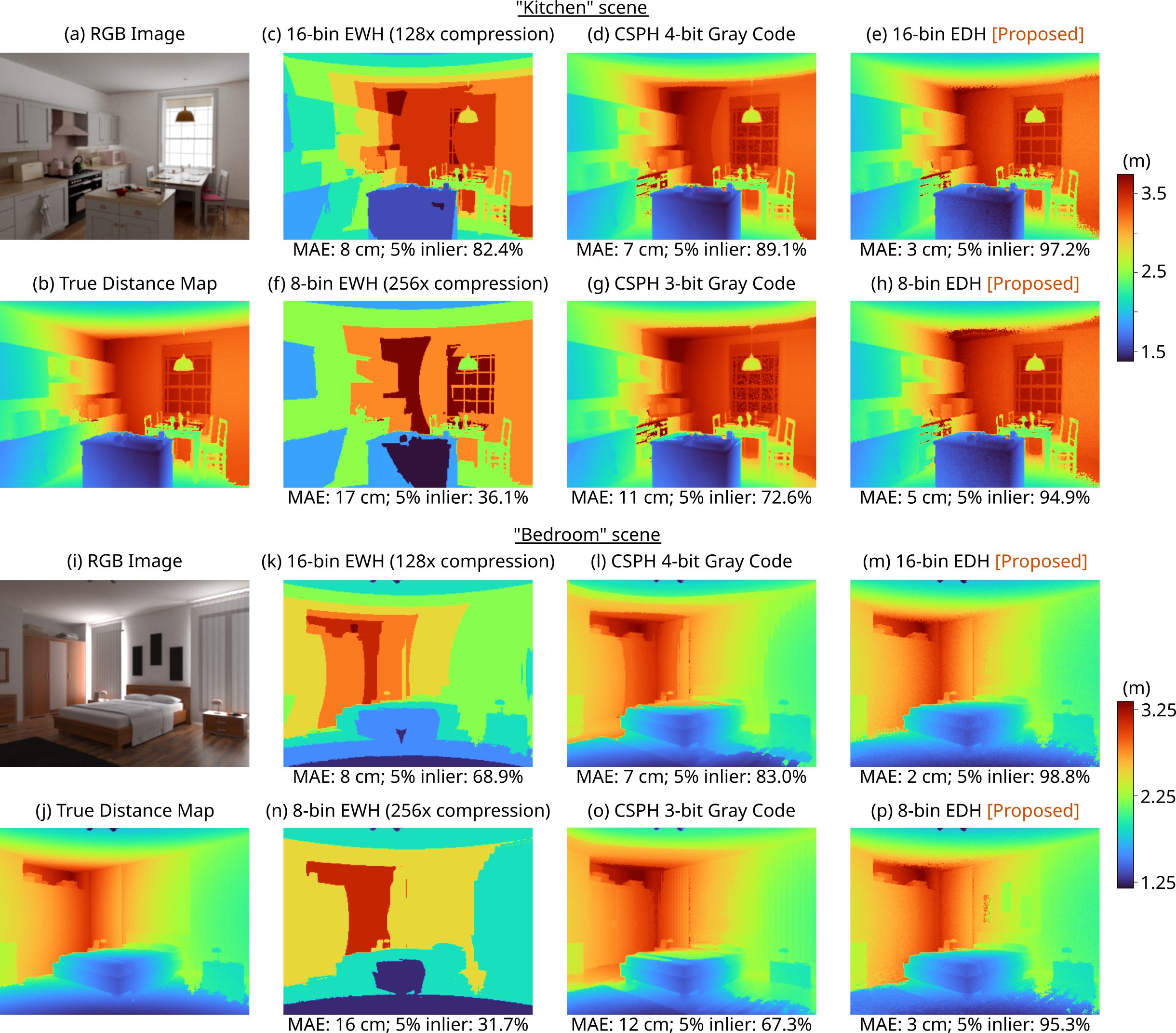}
  \caption{ {\bf Rendered results for ``kitchen'' and ``bedroom'' scenes:}
\rnote{(a,i) RGB images of rendered scenes.
(b,j) Ground-truth distance maps with (note different color scales for the two scenes).
(c,k) Distance maps reconstructed using a coarse EW histogram method that uses 16 equi-width bins ($128\times$ compression over a 2000 bin EW histogram) show strong quantization noise.
(d,l) Distance maps reconstructed using Gray code-based compressive histograms (CSPH) \cite{gutierrez2022compressive} provides improved visual quality.
(e,m) Our method provides reliable distance estimates with just 16-bin EDH, achieving $128\times$ compression over a conventional $2000$-bin EW histogram.
This approach uses a per-pixel ED histogrammer with the ``argmax'' distance estimator with no additional post-processing.
(f,n,g,o,h,p) Details in the distance maps are preserved even with an 8-bin EDH, achieving a $256\times$ compression over a full-resolution EW histogram and comparable distance map quality to the CSPH method.}
\label{fig:scene_results}}
\end{figure*}

\subsection{Single-Pixel Simulations}
For our single-pixel simulations we model the laser pulse shape as a Gaussian with two different FWHM values of $\SI{1}{\ns}$ and $\SI{5}{\ns}$ (which we refer to as ``narrow'' and ``wide'' pulses).\footnote{
Note that although there are picosecond FWHM lasers commercially available, they are costlier than nanosecond lasers.
Thus we deliberately chose ns-range pulse widths for our simulations to assess performance in low-cost, resource-constrained settings.}
This model is common in the literature for the time-varying light intensity seen from a pulsed-laser source.
We use a Poisson distribution to generate photon arrival delays from this time-varying Gaussian intensity profile.

\smallskip
\noindent{\bf Simulation parameters:}
Our baseline method is an EW histogram consisting of $1024$ bins and a bin-width of $\Delta=\SI{128}{\ps}$, which corresponds to a laser pulse rate of $\sim$$\SI{7.5}{\mega\hertz}$ (maximum distance $\approx \SI{20}{\m}$).
The laser strength is varied over $\Phi_\text{sig} \in \{0.1, 0.2, 0.5,1.0,2.0\}$ signal photons, on average, per laser cycle.
The constant offset introduced by the background strength is varied over $\Phi_\text{bkg}/B \in \{10^{-4},5\times 10^{-4}, 10^{-3}, 5\times 10^{-3}\}$
ambient photons per laser cycle per EW bin.
Our simulations are of a 16-bin (four-stage) EDH with a window size of 1024 units of \SI{128}{\ps} each.
For each combination of signal and background, we conduct a Monte Carlo session of 100 histogrammer runs (of 5000 laser cycles each), with the peak position chosen uniformly randomly over the $\SI{20}{\m}$ range.

We evaluate performance using three different error metrics: (a) $\delta=0.05$ metric (fraction of estimates that fall within $5\%$ of the ground truth); (b) $\delta=0.01$ metric (fraction that fall within $1\%$ of the ground truth); and (c) mean absolute error with respect to the ground truth.
For the EDH simulation, we estimate the peak location using both the argmax and quadratic curve fit methods from Sec.~\ref{sec:edh}.
For the EW baseline we use the bin position with maximum photon count as the distance estimator.

The boxplots in Fig.~\ref{fig:singlepixel_montecarlo} show the $5\%$ and $1\%$-error metrics using a wide (\SI{5}{\nano\second}) Gaussian pulse.
Each box plot represents results over 20 different combinations of signal strength and background light level.
The left-hand plots show results for a single readout per run; the right-hand plots are for the median estimate over five readouts closely spaced after the initial 5000 cycles in a run.
The EW histogram baseline that uses 1024 bins generally performs well, although at the cost of requiring at least an order of magnitude more data. 
See the supplement for additional boxplots for mean-absolute error metric that show similar trend as the $\delta$-metrics.

\subsection{Flash LiDAR Dataset}
We use the ``iToF2dToF'' dataset \cite{gutierrez2021itof2dtof}, which is generated using a transient rendering engine.
This dataset consists of rendered RGB images of various scenes together with corresponding transient distributions for each scene point captured by an ideal time-resolved camera.
The rendering technique uses flash illumination, where the entire field of view is flood-illuminated by the laser pulse.
This method allows the possibility of multiple reflections between scene points before the photons return to the camera.
The transient distributions in this dataset exhibit more complex shapes and multiple peaks, unlike the single Gaussian peak model used for the single-pixel evaluations.
(The supplement provides some examples of these transient distributions.)
Treating the ideal transient as the scene ``impulse response'', we convolve it with a Gaussian kernel ($\sigma=\SI{0.6}{\ns}$) to emulate a FWHM $=\SI{1}{\nano\second}$ laser pulse.

The sum of the ground truth transients along the time-axis for each pixel is used as a proxy albedo map.
We assume a mean signal strength and a mean background strength for the whole scene and weight the ground truth transients proportional to the albedo of each pixel.
The ground truth transients in this dataset consist of $B=2000$ samples over a maximum distance range of $\SI{20}{\meter}$.

We use these transients to generate photon arrival delays by sampling an inhomogeneous Poisson distribution at each discrete distance location over $N=5000$ laser cycles.
As a baseline comparison, we use a full resolution EW histogram with $2000$ bins, where each bin has a width of $\SI{33.33}{\pico\second}$.
\rnote{Recent hardware implementations use coarse binning \cite{gyongy2021directTutorial,gyongy2023direct} to reduce the amount of data that must be transferred off-sensor.}
We compress these histograms by $128\times$ and $256\times$ by coarsely binning them into $16$-bins and $8$-bin EW histograms.
We use the center of the peak bin in an EW histogram as a distance estimate.

Fig.~\ref{fig:scene_results} show results for two scenes assuming an average signal strength of $\Phi_\text{sig}=2.0$ and average background strength of $\Phi_\text{bkg}/B=0.0001$ for both scenes.
For both the 16-bin and 8-bin EDH results, we use the argmax method with no additional smoothing or post-processing.
Remarkably, the degradation in overall distance reconstruction quality is almost imperceptible with our EDH method, even for the 8-bin case.
In contrast, the coarse-binning compression method shows a clear degradation in distance estimates due to severe quantization noise.
Observe that in the ``kitchen'' scene, the EDH method shows distance estimation errors near the top corner (where the two walls and the ceiling meet) where the transient distribution contains multiple peaks.
In the ``bedroom'' scene, the low reflectivity scene points (black picture frames on the back wall) have larger errors in the 8-bin case.
Suppl.~Sec.~\ref{app:add-results} shows a table of all quantitative metrics for the 25 scenes in this dataset.


\rnote{Fig.~\ref{fig:scene_results} also shows simulated comparison with the Gray-coding-based compressive histogram (CSPH) method \cite{gutierrez2022compressive}.
Observe that the distance maps with the CSPH method are visually smoother while those with our method have a ``grainy'' appearance.
This artifact arises from the slight jitter in bin boundary locations estimated by our EDH method and could be addressed by additional spatial smoothing during post processing.
Still, our method performs better in terms of mean-absolute-error and 5\% inlier metrics than the CSPH method.
Additional comparisons are shown in Suppl. Fig.~\ref{fig:csph_comparison}.}
Note that the CSPH method requires a global shared memory to store the compression matrix, while our method only needs the pixel CV registers.

\begin{figure*}[!t]
\centering 
\includegraphics[width=0.95\textwidth]{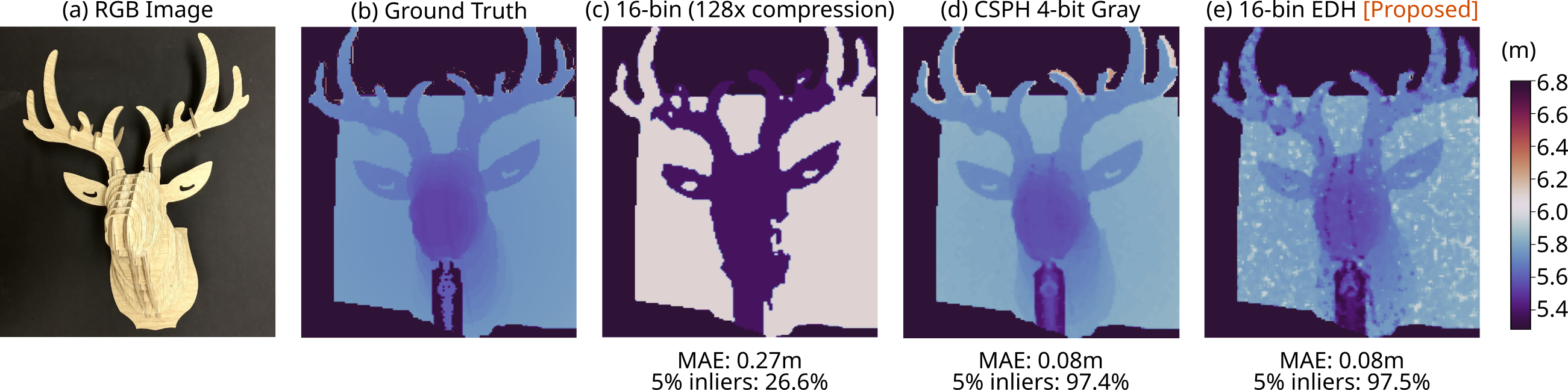}
\caption{\rnote{{\bf Experiment result with real-world ``elk'' dataset \cite{gupta2019asynchronous}}. (a) RGB image of the elk. (b) Ground truth distance map. (c) A conventional 16-bin EWH shows strong quantization artifacts. (d) The CSPH method provides visually smooth distance map reconstruction but suffers from some systematic errors. (e) Our 16-bin EDH method performs comparably to the CSPH method notwithstanding the grainier visual quality. (MAE=mean absolute error in meters, 5\% inliers=fraction of pixels within 5\% of the ground truth.)
}
\label{fig:expt_elk}}
\end{figure*}



\subsection{Hardware Emulation}
A hardware implementation of an in-pixel binner circuit based on race logic is currently not available.
Our simulation study includes the effect of Poisson noise, detector quantum efficiency, and low levels of dark-count noise, but not other sources of noise, such as afterpulsing.
Here we show hardware emulation results using real-world SPAD data that includes all realistic sources of noise, including dark counts, afterpulsing, and effects of dead-time.
We use raw photon timestamp data from a publicly available SPC dataset of experimentally captured photon-timestamp streams \cite{gupta2019asynchronous}.
This data was captured using a single-pixel SPAD-LiDAR prototype with a dead-time of \SI{100}{\ns}, dark count rate of $\approx 100$ counts/s and afterpulsing probability of $\approx 1\%$.
We emulate the behavior of both a 16-bin EWH and EDH by replaying the photon timestamp stream over \rnote{$5000$ laser cycles. Each stage of the EDH runs for $1250$ cycles before being frozen and running the next stage of the tree.}

A result is shown in Fig.~\ref{fig:expt_elk}. 
The final distance maps are denoised using a $3\times 3$ median filter.
The transient distribution in this dataset contains multiple peaks (due to interreflections).
The EDH result is generated by picking the first locally narrow bin location as the distance estimate.
The final result appears grainy because the method occasionally locks into the wrong peak.
Our method provides a better MAE and inlier metric compared to a 16-bin EW histogram result, which suffers from strong quantization errors.
\rnote{Our method also performs comparably to the Gray code-based compressive histograms method \cite{gutierrez2022compressive} in terms of the MAE and inlier metrics, although the visual quality of the CSPH result is much smoother. 
Additional results at varying SBR levels are shown in Suppl. Fig.~\ref{fig:elk_vary_sbr}.}

\section{Discussion and Future Directions}
We have proposed equi-depth histograms as a bandwidth- and energy-efficient representation for capturing scene distance information using SPCs.
Using a binner---which adaptively finds a given quantile---as our basic building block, we describe an equi-depth histogrammer that determines multiple bin boundaries without explicitly storing a history of photon counts.
The binner is amenable to implementation with race logic, a technology that operates in the delay domain, which is well suited to processing return events at a single-photon pixel.
Our approach shows promise for reducing bandwidth while maintaining similar distance accuracy as existing resource-hungry methods that rely on storing and processing equi-width histograms with thousands of bins.
An EDH-based SPC can achieve an energy savings of $\sim 10-100\times$, depending on various factors such as the number of laser pulses needed for convergence and energy consumption of each readout.
Suppl. Sec.~\ref{app:hardware} provides an energy-budget analysis.

Future work will experimentally evaluate real-world hardware implementations of this proposed method.
This work will require developing custom binner hardware that implements a high-speed race logic (PASS-INHIBIT) operation either in the analog domain or using an FPGA \cite{Tzimpragos_2021}.
Some speculative hardware designs (analog and digital) are shown in Suppl. Sec.~\ref{app:hardware}.

The ED histogram bin boundaries can be treated as a novel low-level scene representation.
Additional scene information (more than just scene distances) can be directly inferred from an 8- or 16-bin ED histogram.
\rnote{The binner circuit can be repurposed for intensity estimation by using the binner's register memory as a passive photon counter, or by artificially injecting fake pulses at a known rate to calibrate the absolute bin populations of the ED histogram.}
In the future we will investigate algorithms to directly learn scene properties such as reflectance, material properties, camera pose, and motion cues using ED histogram boundaries as the scene representation.

\section*{Acknowledgements}
This research was supported in part by NSF Award 2138471.

{\small
\bibliographystyle{IEEEtran}
\bibliography{reflist}

\begin{thebibliography}{10}
\providecommand{\url}[1]{#1}
\csname url@samestyle\endcsname
\providecommand{\newblock}{\relax}
\providecommand{\bibinfo}[2]{#2}
\providecommand{\BIBentrySTDinterwordspacing}{\spaceskip=0pt\relax}
\providecommand{\BIBentryALTinterwordstretchfactor}{4}
\providecommand{\BIBentryALTinterwordspacing}{\spaceskip=\fontdimen2\font plus
\BIBentryALTinterwordstretchfactor\fontdimen3\font minus
  \fontdimen4\font\relax}
\providecommand{\BIBforeignlanguage}[2]{{%
\expandafter\ifx\csname l@#1\endcsname\relax
\typeout{** WARNING: IEEEtran.bst: No hyphenation pattern has been}%
\typeout{** loaded for the language `#1'. Using the pattern for}%
\typeout{** the default language instead.}%
\else
\language=\csname l@#1\endcsname
\fi
#2}}
\providecommand{\BIBdecl}{\relax}
\BIBdecl

\bibitem{Rangwala2020}
S.~Rangwala, ``{The iPhone 12 - LiDAR At Your Fingertips},'' Online (accessed
  July 2, 2022), Nov 2020,
  \url{https://www.forbes.com/sites/sabbirrangwala/2020/11/12/the-iphone-12lidar-at-your-fingertips/?sh=548012a73e28}.

\bibitem{Ouster2022}
Ouster, ``{Fully autonomous turbine inspection with Clobotics and Ouster},''
  Online (accessed July 2, 2022), Jun 2022,
  \url{https://ouster.com/blog/fully-autonomous-turbine-inspection-with-clobotics-and-ouster/}.

\bibitem{Horiba}
Horiba, ``{FLIMera: Imaging camera for dynamic FLIM studies at real time video
  rates},'' Online (accessed July 2, 2022),
  \url{https://www.horiba.com/int/scientific/products/detail/action/show/Product/flimera-1989/}.

\bibitem{lange20003d}
R.~Lange, ``3d time-of-flight distance measurement with custom solid-state
  image sensors in cmos/ccd-technology,'' Ph.D. dissertation, University of
  Siegen, 2000, chapter 2.

\bibitem{Tzimpragos_2021}
G.~Tzimpragos, A.~Madhavan, D.~Vasudevan, D.~Strukov, and T.~Sherwood,
  ``In-sensor classification with boosted race trees,'' \emph{Communications of
  the {ACM}}, vol.~64, no.~6, pp. 99--105, jun 2021.

\bibitem{zhang2022first}
T.~Zhang, M.~J. White, A.~Dave, S.~Ghajari, A.~Raghuram, A.~C. Molnar, and
  A.~Veeraraghavan, ``First arrival differential lidar,'' in \emph{2022 IEEE
  International Conference on Computational Photography (ICCP)}.\hskip 1em plus
  0.5em minus 0.4em\relax IEEE, 2022, pp. 1--12.

\bibitem{white2022differential}
M.~White, S.~Ghajari, T.~Zhang, A.~Dave, A.~Veeraraghavan, and A.~Molnar, ``A
  differential spad array architecture in 0.18 $\mu$m cmos for hdr imaging,''
  in \emph{2022 IEEE International Symposium on Circuits and Systems
  (ISCAS)}.\hskip 1em plus 0.5em minus 0.4em\relax IEEE, 2022, pp. 292--296.

\bibitem{gyongy2021directTutorial}
I.~Gyongy, N.~A. Dutton, and R.~K. Henderson, ``Direct time-of-flight
  single-photon imaging,'' \emph{IEEE Transactions on Electron Devices}, 2021.

\bibitem{shin2016photon}
D.~Shin, F.~Xu, D.~Venkatraman, R.~Lussana, F.~Villa, F.~Zappa, V.~K. Goyal,
  F.~N. Wong, and J.~H. Shapiro, ``Photon-efficient imaging with a
  single-photon camera,'' \emph{Nature communications}, vol.~7, no.~1, pp.
  1--8, 2016.

\bibitem{rapp2021high}
J.~Rapp, Y.~Ma, R.~M. Dawson, and V.~K. Goyal, ``High-flux single-photon
  lidar,'' \emph{Optica}, vol.~8, no.~1, pp. 30--39, 2021.

\bibitem{gupta2019photon}
A.~Gupta, A.~Ingle, A.~Velten, and M.~Gupta, ``Photon-flooded single-photon 3d
  cameras,'' in \emph{Proceedings of the IEEE/CVF Conference on Computer Vision
  and Pattern Recognition}, 2019, pp. 6770--6779.

\bibitem{gupta2019asynchronous}
A.~Gupta, A.~Ingle, and M.~Gupta, ``Asynchronous single-photon 3d imaging,'' in
  \emph{Proceedings of the IEEE/CVF International Conference on Computer
  Vision}, 2019, pp. 7909--7918.

\bibitem{patanwala2021high}
S.~M. Patanwala, I.~Gyongy, H.~Mai, A.~A{\ss}mann, N.~A. Dutton, B.~R. Rae, and
  R.~K. Henderson, ``A high-throughput photon processing technique for range
  extension of spad-based lidar receivers,'' \emph{IEEE Open Journal of the
  Solid-State Circuits Society}, vol.~2, pp. 12--25, 2021.

\bibitem{sheehan2021sketching}
M.~P. Sheehan, J.~Tachella, and M.~E. Davies, ``A sketching framework for
  reduced data transfer in photon counting lidar,'' \emph{IEEE Transactions on
  Computational Imaging}, vol.~7, pp. 989--1004, 2021.

\bibitem{gutierrez2022compressive}
F.~Gutierrez-Barragan, A.~Ingle, T.~Seets, M.~Gupta, and A.~Velten,
  ``Compressive single-photon 3d cameras,'' in \emph{Proceedings of the
  IEEE/CVF Conference on Computer Vision and Pattern Recognition}, 2022, pp.
  17\,854--17\,864.

\bibitem{colacco2012compressive}
A.~Cola{\c{c}}o, A.~Kirmani, G.~A. Howland, J.~C. Howell, and V.~K. Goyal,
  ``Compressive depth map acquisition using a single photon-counting detector:
  Parametric signal processing meets sparsity,'' in \emph{2012 IEEE Conference
  on Computer Vision and Pattern Recognition}.\hskip 1em plus 0.5em minus
  0.4em\relax IEEE, 2012, pp. 96--102.

\bibitem{poisson20222}
V.~Poisson, W.~Guicquero, D.~Coriat, and G.~Sicard, ``A 2-stage em algorithm
  for online peak detection, an application to tcspc data,'' \emph{IEEE
  Transactions on Circuits and Systems II: Express Briefs}, 2022.

\bibitem{hutchings2019reconfigurable}
S.~W. Hutchings, N.~Johnston, I.~Gyongy, T.~Al~Abbas, N.~A. Dutton, M.~Tyler,
  S.~Chan, J.~Leach, and R.~K. Henderson, ``A reconfigurable 3-d-stacked spad
  imager with in-pixel histogramming for flash lidar or high-speed
  time-of-flight imaging,'' \emph{IEEE Journal of Solid-State Circuits},
  vol.~54, no.~11, pp. 2947--2956, 2019.

\bibitem{lindner2018252}
S.~Lindner, C.~Zhang, I.~M. Antolovic, M.~Wolf, and E.~Charbon, ``A 252$\times$
  144 spad pixel flash lidar with 1728 dual-clock 48.8 ps tdcs, integrated
  histogramming and 14.9-to-1 compression in 180nm cmos technology,'' in
  \emph{2018 IEEE Symposium on VLSI Circuits}.\hskip 1em plus 0.5em minus
  0.4em\relax IEEE, 2018, pp. 69--70.

\bibitem{vornicu2019tof}
I.~Vornicu, A.~Darie, R.~Carmona-Galan, and {\'A}.~Rodr{\'\i}guez-V{\'a}zquez,
  ``Tof estimation based on compressed real-time histogram builder for spad
  image sensors,'' in \emph{2019 IEEE International Symposium on Circuits and
  Systems (ISCAS)}.\hskip 1em plus 0.5em minus 0.4em\relax IEEE, 2019, pp.
  1--4.

\bibitem{po2021adaptive}
R.~Po, A.~Pediredla, and I.~Gkioulekas, ``Adaptive gating for single-photon 3d
  imaging,'' \emph{arXiv preprint arXiv:2111.15047}, 2021.

\bibitem{frames2016}
\BIBentryALTinterwordspacing
M.~Grossniklaus, D.~Maier, J.~Miller, S.~Moorthy, and K.~Tufte, ``Frames:
  Data-driven windows,'' in \emph{Proceedings of the 10th ACM International
  Conference on Distributed and Event-Based Systems}, ser. DEBS 16.\hskip 1em
  plus 0.5em minus 0.4em\relax New York, NY, USA: Association for Computing
  Machinery, 2016, p. 13–24. [Online]. Available:
  \url{https://doi.org/10.1145/2933267.2933304}
\BIBentrySTDinterwordspacing

\bibitem{segmenting2001}
E.~J. Keogh, S.~Chu, D.~Hart, and M.~J. Pazzani, ``An online algorithm for
  segmenting time series,'' in \emph{Proceedings of the 2001 IEEE International
  Conference on Data Mining}, ser. ICDM '01.\hskip 1em plus 0.5em minus
  0.4em\relax USA: IEEE Computer Society, 2001, p. 289–296.

\bibitem{shatkay1996approximate}
H.~Shatkay and S.~B. Zdonik, ``Approximate queries and representations for
  large data sequences,'' in \emph{Proceedings of the Twelfth International
  Conference on Data Engineering}.\hskip 1em plus 0.5em minus 0.4em\relax IEEE,
  1996, pp. 536--545.

\bibitem{himberg2001segmentation}
J.~Himberg, K.~Korpiaho, H.~Mannila, J.~Tikanmaki, and H.~Toivonen, ``Time
  series segmentation for context recognition in mobile devices,'' in
  \emph{Proceedings 2001 IEEE International Conference on Data Mining}, 2001,
  pp. 203--210.

\bibitem{scanstats}
J.~Glaz and S.~Wallenstein, \emph{Scan Statistics - Methods and
  Applications}.\hskip 1em plus 0.5em minus 0.4em\relax Springer, 01 2009.

\bibitem{preston-event-2009}
D.~Preston, P.~Protopapas, and C.~Brodley, ``Event discovery in time series,''
  in \emph{Proceedings of the 2009 SIAM International Conference on Data
  Mining}.\hskip 1em plus 0.5em minus 0.4em\relax SIAM, 2009, pp. 61--72.

\bibitem{guha-approx-2006}
\BIBentryALTinterwordspacing
S.~Guha, N.~Koudas, and K.~Shim, ``Approximation and streaming algorithms for
  histogram construction problems,'' \emph{ACM Trans. Database Syst.}, vol.~31,
  no.~1, p. 396–438, mar 2006. [Online]. Available:
  \url{https://doi.org/10.1145/1132863.1132873}
\BIBentrySTDinterwordspacing

\bibitem{manku-approx-2002}
G.~S. Manku and R.~Motwani, ``Approximate frequency counts over data streams,''
  in \emph{Proceedings of the 28th International Conference on Very Large Data
  Bases}, ser. VLDB '02.\hskip 1em plus 0.5em minus 0.4em\relax VLDB Endowment,
  2002, p. 346–357.

\bibitem{madhavan2015race}
A.~Madhavan, T.~Sherwood, and D.~Strukov, ``Race logic: abusing hardware race
  conditions to perform useful computation,'' \emph{IEEE Micro}, vol.~35,
  no.~3, pp. 48--57, 2015.

\bibitem{Smith2018}
J.~Smith, ``Space-time algebra: A model for neocortical computation,'' in
  \emph{2018 {ACM}/{IEEE} 45th Annual International Symposium on Computer
  Architecture ({ISCA})}.\hskip 1em plus 0.5em minus 0.4em\relax {IEEE}, Jun
  2018.

\bibitem{lee2002race}
S.-J. Lee and H.-J. Yoo, ``Race logic architecture (rala): a novel logic
  concept using the race scheme of input variables,'' \emph{IEEE Journal of
  Solid-State Circuits}, vol.~37, no.~2, pp. 191--201, 2002.

\bibitem{madhavan2014race}
A.~Madhavan, T.~Sherwood, and D.~Strukov, ``Race logic: A hardware acceleration
  for dynamic programming algorithms,'' \emph{ACM SIGARCH Computer Architecture
  News}, vol.~42, no.~3, pp. 517--528, 2014.

\bibitem{lindell2018single}
D.~B. Lindell, M.~O'Toole, and G.~Wetzstein, ``Single-photon 3d imaging with
  deep sensor fusion.'' \emph{ACM Trans. Graph.}, vol.~37, no.~4, pp. 113--1,
  2018.

\bibitem{sun2019simple}
F.~Sun, Y.~Xu, Z.~Wu, and J.~Zhang, ``A simple analytic modeling method for
  spad timing jitter prediction,'' \emph{IEEE Journal of the Electron Devices
  Society}, vol.~7, pp. 261--267, 2019.

\bibitem{shimada2021back}
S.~Shimada, Y.~Otake, S.~Yoshida, S.~Endo, R.~Nakamura, H.~Tsugawa, T.~Ogita,
  T.~Ogasahara, K.~Yokochi, Y.~Inoue \emph{et~al.}, ``A back illuminated 6
  $\mu$m spad pixel array with high pde and timing jitter performance,'' in
  \emph{2021 IEEE International Electron Devices Meeting (IEDM)}.\hskip 1em
  plus 0.5em minus 0.4em\relax IEEE, 2021, pp. 20--1.

\bibitem{guha2009stream}
S.~Guha and A.~McGregor, ``Stream order and order statistics: Quantile
  estimation in random-order streams,'' \emph{SIAM Journal on Computing},
  vol.~38, no.~5, pp. 2044--2059, 2009.

\bibitem{munro1980selection}
J.~I. Munro and M.~S. Paterson, ``Selection and sorting with limited storage,''
  \emph{Theoretical computer science}, vol.~12, no.~3, pp. 315--323, 1980.

\bibitem{coates1968correction}
P.~Coates, ``The correction for photonpile-up'in the measurement of radiative
  lifetimes,'' \emph{Journal of Physics E: Scientific Instruments}, vol.~1,
  no.~8, p. 878, 1968.

\bibitem{pediredla2018signal}
A.~K. Pediredla, A.~C. Sankaranarayanan, M.~Buttafava, A.~Tosi, and
  A.~Veeraraghavan, ``Signal processing based pile-up compensation for gated
  single-photon avalanche diodes,'' \emph{arXiv preprint arXiv:1806.07437},
  2018.

\bibitem{beer2018background}
M.~Beer, J.~F. Haase, J.~Ruskowski, and R.~Kokozinski, ``Background light
  rejection in {SPAD}-based {LiDAR} sensors by adaptive photon coincidence
  detection,'' \emph{Sensors}, vol.~18, no.~12, 2018.

\bibitem{gutierrez2021itof2dtof}
F.~Gutierrez-Barragan, H.~Chen, M.~Gupta, A.~Velten, and J.~Gu, ``itof2dtof: A
  robust and flexible representation for data-driven time-of-flight imaging,''
  \emph{IEEE Transactions on Computational Imaging}, vol.~7, pp. 1205--1214,
  2021.

\bibitem{gyongy2023direct}
I.~Gyongy, A.~T. Erdogan, N.~A. Dutton, G.~M. Mart{\'\i}n, A.~Gorman, H.~Mai,
  F.~M. Della~Rocca, and R.~K. Henderson, ``A direct time-of-flight image
  sensor with in-pixel surface detection and dynamicc vision,'' \emph{J. Sel.
  Top. Quant. Elect.}, 2023.

\bibitem{abramowitz_1964}
M.~Abramowitz and I.~A. Stegun, \emph{Handbook of Mathematical Functions: With
  Formulas, Graphs, and Mathematical Tables}, 9th~ed.\hskip 1em plus 0.5em
  minus 0.4em\relax Dover American Nurses Association Publications, 1964,
  vol.~55.

\bibitem{grimmett_probbook_2001}
G.~R. Grimmett and D.~R. Stirzaker, \emph{Probability and Random Processes},
  3rd~ed.\hskip 1em plus 0.5em minus 0.4em\relax Oxford University Press, 2001.

\bibitem{mitzenmacher2017probability}
M.~Mitzenmacher and E.~Upfal, \emph{Probability and computing: Randomization
  and probabilistic techniques in algorithms and data analysis}.\hskip 1em plus
  0.5em minus 0.4em\relax Cambridge university press, 2017.

\bibitem{morimoto2020megapixel}
K.~Morimoto, A.~Ardelean, M.-L. Wu, A.~C. Ulku, I.~M. Antolovic, C.~Bruschini,
  and E.~Charbon, ``Megapixel time-gated spad image sensor for 2d and 3d
  imaging applications,'' \emph{Optica}, vol.~7, no.~4, pp. 346--354, 2020.

\end{thebibliography}
}

\clearpage
\onecolumn
\normalsize
\pagebreak
\renewcommand{\figurename}{Supplementary Figure}
\renewcommand{\tablename}{Supplementary Table}
\renewcommand{\thesection}{S.\arabic{section}}
\renewcommand{\thesection}{S.\arabic{section}}
\renewcommand{\theequation}{S\arabic{equation}}
\setcounter{figure}{0}
\setcounter{section}{0}
\setcounter{equation}{0}
\setcounter{page}{1}

\begin{center}
\Large Supplementary Document for\\
\Large ``Count-Free Single-Photon 3D Imaging with Race Logic'' \\[0.2cm]
\large Atul Ingle, David Maier\\[0.1cm]
\large Portland State University\\[0.1cm]
\normalsize \texttt{\{ingle2, maier\}@pdx.edu}
\end{center}

    \section{Single-stage Binner: Markov-chain Analysis \label{app:markov}}
    
    We model the control value of a 2-bin median-tracking binner as a Markov chain (Suppl. Fig. \ref{fig:markov_chain}).
    For simplicity we consider a window length $L$ with a binner that only takes discrete integer valued steps of size $\pm 1$. 
    At each laser cycle $t$ in a run, the binner's  control value $E_t$ is a random variable that lives in the set $\mathcal{L}=\{0,\ldots,L\}$, where $L$ denotes the number of window locations.
    After each laser cycle the sensor pixel receives photons in different window locations $i$ with independent non-identically Poisson-distributed counts $N_i \sim \text{Poisson}(r_i)$, where $r_i$ is the mean of the Poisson observation at location $i$, $1\leq i \leq L$.\footnote{Note that in this model we assume that the binner chooses the boundaries between the discrete window locations, hence there are $L+1$ possible control value locations for $L$ window locations.
    With small modifications to indexing, the following analysis will still hold in the case where the binner output is the same length as the number of window locations.}
    
    \begin{figure}[!ht]
        \centering
       \includegraphics[width=0.9\textwidth]{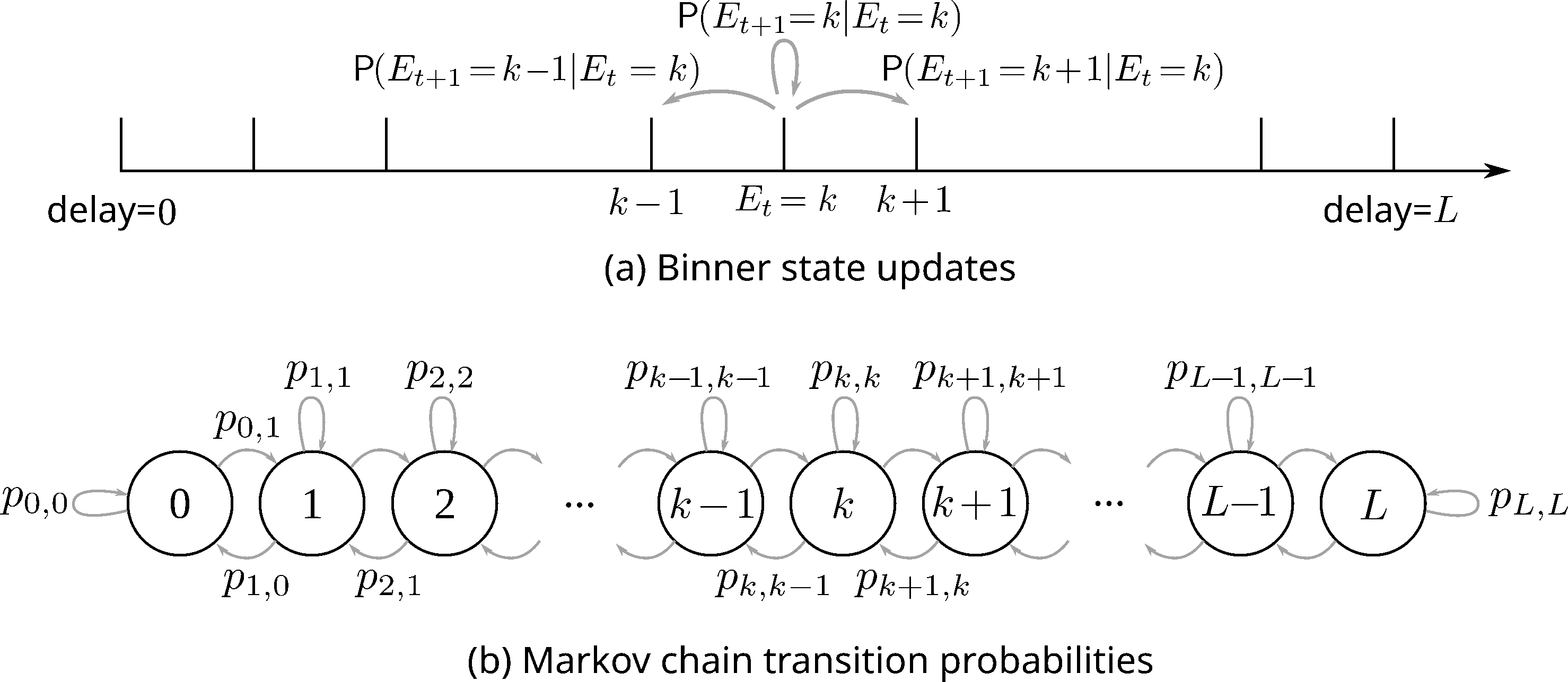}
        \caption{{\bf Markov chain model for online updates of a binner.}
          (a) After the $t^\text{th}$ laser cycle the binner's control value (denoted by $E_t$) is updated to a new value $E_{t+1}$ depending on the number of photons received between the delay ranges of $[0,E_t)$ and $[E_t, L)$.
          In the case of a step size of 1, the control value is decremented (incremented) by one if the total on the left is larger (smaller) than the total on the right.
          Otherwise it stays unchanged.
          (b) The binner's control value can be modeled using a Markov chain with transition probabilities $p_{i,j}$ where $0 \leq i, j \leq L.$
        These probabilities are time-homogeneous (i.e. they do not change over time).
        \label{fig:markov_chain}}
    \end{figure}
    
    In our race logic-based binner implementation, we do not directly observe the $N_i$'s.
    We track if the total photons on the left of $E_t$ is larger or smaller than the total on the right.
    On the $t^\text{th}$ laser cycle, our algorithm increments or decrements the estimate depending on the which side of current control value $E_t$ receives more photons:
    
    \[
    E_{t+1} =
    \left\{
    	\begin{array}{ll}
        \max(0, E_t - 1)  & \mbox{if } \sum_{i=0}^{E_t}N_i > \sum_{i=E_t}^L N_i \\
        \min(E_t + 1, L)  &  \mbox{if } \sum_{i=0}^{E_t}N_i < \sum_{i=E_t}^L N_i \\
        E_t      & \mbox{if } \sum_{i=0}^{E_t} N_i = \sum_{i=E_t}^L N_i,
    	\end{array}
    \right.
    \]
    where the max and min checks ensure that we do not overflow the limits $0$ and $L$.
    
    We will denote the transition probabilities by $p_{i,j} = \mathsf{P}\left( E_{t+1} = j | E_{t} = i\right).$
    For convenience, define $s_m \defeq \sum_0^{m-1} r_i$ for $1\leq m \leq L$ and $t_m = \sum_m^{L-1} r_i$ for $0 \leq m \leq L-1$, and note that $s_L = t_0 = \sum_0^{L-1} r_i$.
    For $1 \leq k \leq L-1$, the probability that the estimate moves to the right is given by:
    \begin{align*}
        p_{k,k+1} &= \mathsf{P}\left( \sum_{i=k}^{L-1} N_i > \sum_{i=0}^{k-1} N_i \right) \\
                &= 1 - S\left(0; \sum_{i=k}^{L-1}r_i, \sum_{i=0}^{k-1} r_i\right) \\
                &= 1 - S(0; t_k, s_k)
    \end{align*}
    where $S(\cdot; \mu_1, \mu_2)$ is the cumulative distribution function (cdf) of a Skellam distribution with parameters $\mu_1$ and $\mu_2$.
    The probability that the estimate moves to the left is given by:
    \begin{align*}
        p_{k,k-1} &= \mathsf{P}\left( \sum_{i=0}^{k-1} N_i > \sum_{i=k}^{L-1} N_i \right) \\
                &= 1 - S\left(0; \sum_{i=0}^{k-1}r_i, \sum_{i=k}^{L-1} r_i\right) \\
                &= 1 - S(0; s_k, t_k).
    \end{align*}
    Finally, the probability that the estimate does not move is given by
    $$
    p_{k,k} = S(0; t_k, s_k) + S(0; s_k, t_k) - 1 = e^{-s_L} I_0 \left(2\sqrt{t_k s_k}\right)
    $$
    where $I_0(\cdot)$ is the zeroth-order modified Bessel function of the second kind \cite{abramowitz_1964}.
    The edge cases $k=0$ and $k=L$ are handled separately:
    
    \begin{align*}
        p_{0,0} &= e^{-s_L} \\
        p_{0,1} &= 1 - e^{-s_L} \\
        p_{B,L-1} &= 1 - e^{-s_L} \\
        p_{L,L} &= e^{-s_L}.
    \end{align*}

    The right stochastic transition matrix of this Markov chain is a tri-diagonal:
    \begin{equation}
    \mathbf{P} =
    \begin{bmatrix}
    p_{0,0} & p_{0,1} &         &         &              &        &       \\
    p_{1,0} & p_{1,1} & p_{1,2} &         &              &        &       \\
            & p_{2,1} & p_{2,2} & p_{2,3} &              &        &       \\
            &         & p_{3,2} & p_{3,3} & p_{3,4}      &        &       \\
            &         &         & \ddots & \ddots & \ddots  &        &       \\
            &         &         &         & p_{L-2,L-1} & p_{L-1,L-1} & p_{L-1,L}  \\
            &         &         &         &     & p_{L,L-1} & p_{L,L}  \\
    \end{bmatrix}
    \label{eq:transition_matrix}
    \end{equation}
    Note that this Markov chain is irreducible and aperiodic.
    By the Perron-Frobenius theorem \cite{grimmett_probbook_2001}, it must have a stationary distribution.
    If we start with an initial state chosen uniformly randomly and allow the chain to run for a long time, we expect the chain to live close to the median of $r_i$.
    Note, however, that there is still a non-zero probability the Markov chain will make excursions around these expected settling locations; the larger the excursions, lower the probability.
    These settling locations, therefore, should be understood as the modes of the stationary distribution of the Markov chain after the chain runs for a long time.
    
    \begin{theorem*}
    Let the total photon rate (signal+background) for the transient distribution be $\mu \defeq \sum_{i=1}^L r_i$.
    Suppose the binner control value $\mathsf{CV} \neq$ true median and it splits the transient distribution into two fractions $f$ and $1-f$ where $0<f<1$.
    Let $\epsilon >0$ be an arbitrarily small probability threshold.
    Then, as long as $\mu > \left(\frac{1}{\sqrt{f}-\sqrt{1-f}}\right)^2 \log\left(\frac{1}{\epsilon}\right),$
    the probability that $\mathsf{CV}$ does not move towards the median on the next cycle is $< \epsilon$.
    \end{theorem*}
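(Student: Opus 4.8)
The plan is to recognize the ``bad event'' as the upper tail of a difference of two independent Poisson counts and to control it with a Chernoff bound. First I would fix orientation: without loss of generality suppose the true median lies to the right of $\mathsf{CV}$, so the side toward the median carries the larger fraction, and I take $f > 1/2$ to be that fraction (the median always splits off more than half the mass on its own side). Writing $N_L \sim \text{Poisson}((1-f)\mu)$ and $N_R \sim \text{Poisson}(f\mu)$ for the independent left/right photon counts accumulated in a single cycle, the update rule (Suppl.\ Fig.~\ref{fig:markov_chain}) increments $\mathsf{CV}$ toward the median exactly when $N_R > N_L$. Hence the event ``$\mathsf{CV}$ does not move closer to the median'' --- which includes both moving away and staying put on a tie --- is precisely $\{N_L \geq N_R\}$, i.e.\ $\{X \geq 0\}$ for $X \defeq N_L - N_R$. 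Since $\mathsf{E}[X] = (1-2f)\mu < 0$, this is a tail event in the unlikely direction, the natural regime for a Chernoff argument.

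Next I would bound $\mathsf{P}(X \geq 0)$ through the moment generating function. Using independence and the Poisson MGF $\mathsf{E}[e^{sN}] = e^{\lambda(e^s-1)}$, the Chernoff inequality gives, for every $s>0$, $\mathsf{P}(X \geq 0) \leq \mathsf{E}[e^{sX}] = \exp\!\big((1-f)\mu(e^s-1) + f\mu(e^{-s}-1)\big)$. The exponent is convex in $s$ and minimized at $e^{s^\star} = \sqrt{f/(1-f)}$, which exceeds $1$ because $f > 1/2$, so the optimizer lies in the admissible range $s>0$. The one calculation worth doing is the substitution of $s^\star$: the two square-root terms combine into $2\sqrt{f(1-f)}\,\mu$ and the exponent collapses to the perfect square $-\mu(\sqrt{f}-\sqrt{1-f})^2$, giving the clean bound $\mathsf{P}(X \geq 0) \leq \exp\!\big(-\mu(\sqrt{f}-\sqrt{1-f})^2\big)$, the standard Chernoff/Hellinger estimate for the difference of two Poissons.

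Finally I would invert the bound. Demanding $\exp\!\big(-\mu(\sqrt{f}-\sqrt{1-f})^2\big) < \epsilon$ and taking logarithms yields $\mu > \log(1/\epsilon)/(\sqrt{f}-\sqrt{1-f})^2$, which is exactly the stated threshold. I do not anticipate a genuine obstacle: the argument is a textbook tail bound, and the only step requiring care is the Chernoff optimization, whose reward is the closed-form exponent above. The sole bookkeeping subtlety is the boundary index $N_{E_t}$ shared between the two one-sided sums in the update rule; it contributes at most a single common Poisson term to both sides, which either cancels in the comparison or can be folded into $N_L$ or $N_R$ without changing the mean split $\{(1-f)\mu,\, f\mu\}$, so it leaves the bound intact.
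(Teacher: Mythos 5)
Your proposal is correct and follows essentially the same route as the paper: identify the bad event as $\{N_1 \geq N_2\}$ for two independent Poisson counts with means $f\mu$ and $(1-f)\mu$, apply the Chernoff bound $e^{-(\sqrt{f\mu}-\sqrt{(1-f)\mu})^2}$, and invert to get the stated threshold on $\mu$. The only difference is cosmetic --- you derive the Chernoff exponent explicitly by optimizing the MGF (and relabel $f$ as the median-side fraction, which is immaterial since the exponent is symmetric in $f \leftrightarrow 1-f$), whereas the paper simply cites the bound.
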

    
    \begin{proof}
    Without loss of generality, suppose that the current control value $\mathsf{CV} < $ true median. 
    This case implies that
    $$
    \mu_1=\sum_{i=0}^\mathsf{CV} r_i < \sum_{i=\mathsf{CV}+1}^L r_i =\mu_2.
    $$
    Let the total photon rate (signal+background) be denoted by $\mu = \mu_1 + \mu_2$ and $f=\mu_1/\mu$ be the fraction of the total photon flux to the left of $\mathsf{CV}$.
    Since  $\mu_1 < \mu_2$, $f < 0.5$.
    At the subsequent laser cycle, the number of photons $N_1$ and $N_2$ in the early and late streams respectively are independent Poisson random variables with rates $\mu_1 = f\mu$ and $\mu_2=(1-f)\mu.$
    The probability that $\mathsf{CV}$ does not move closer to the true median on the next laser cycle is equal to the probability that the early stream contains at least as many photons as the late stream ($N_1 \geq N_2)$.
    This probability can be bounded above using a Chernoff bound \cite{mitzenmacher2017probability}:
    $$
        P(N_1 \geq N_2) \leq e^{-(\sqrt{\mu_1} - \sqrt{\mu_2})^2}  = e^{-(\sqrt{f\mu} - \sqrt{(1-f)\mu})^2}.
    $$
    Observe that $e^{-(\sqrt{f\mu} - \sqrt{(1-f)\mu})^2} < \epsilon$
    provided $\mu > \left(\frac{1}{\sqrt{f}-\sqrt{1-f}}\right)^2 \log\left(\frac{1}{\epsilon}\right),$ which completes the proof. 
    \end{proof}
    
    For example, suppose that the current CV splits the transient distributions into two segments in the ratio $1:9$, i.e., $f=0.1$.
    Then as long as the total photon rate $\mu > 9.8$, the probability that CV does not move closer to the true median in the next laser cycle is less than $\epsilon=2\%$.
    Note that this photon rate is much higher than what we would expect to see in a single laser cycle in a real SPAD-based 3D camera.
    This observation suggests adding a low-bit-depth memory element to each pixel that accumulates photons and updates CV every few ($\sim 10$) laser cycles.
    We find that the bound provided by this theoretical result is quite loose in practice, and there is a much higher probability of moving towards the true median even for lower values of $\mu$.
    
    It is instructive to study the behavior of the Markov chain model in simulation for some simple cases.
    Setting the window length $L=1000$ and step size of $\pm 1$, we can numerically compute the stationary distributions over a wide range of signal strengths, SBR and scene distances.
    We first compute the one-step transition matrix (Eq.~\ref{eq:transition_matrix}). 
    Since this matrix is right stochastic, the stationary distribution is the left (row) eigenvector corresponding to an eigenvalue of $1$.
    Suppl. Fig.~\ref{fig:suppl_statn_dist_examples} shows example stationary distributions for distances (true peak locations) in $\{100, 250, 400\}$, signal strengths in $\{0.1, 0.5, 1.0\}$ and SBRs in $\{ 0.01, 0.2, 0.5, 1.0\}.$
    We make the following key observations:
    \begin{itemize}
        \item The true median does not necessarily lie at the true peak location; it is pulled towards the overall midpoint for low signal and high background cases.
        \item The peaks of the stationary distributions are always aligned with the true median locations, implying that the most likely location of the CV is at the true median.
        \item The spread of the stationary distribution around the median is distance dependent; the spread is smaller when the SBR is high or when the true peak location is closer to the overall midpoint ($L/2=500$) of the range.
    \end{itemize}
    
    The spread of the stationary distribution is related to how far a binner will wander away from the true median once it converges.
    A smaller spread is desirable.
    Suppl. Fig.~\ref{fig:suppl_statn_dist_stats} shows the probability mass around small neighborhoods of $\pm 5, \pm 10, \pm 20$ from the true median.
    Higher probabilities are desirable because they indicate tighter concentration around the true median.
    These probabilities depend on the signal strength, SBR and scene distance (peak location).
    The higher the signal and SBR, the higher the probability of finding the binner's control value close to the true median.
    The largest spread is seen for intermediate SBR levels (second row): in this regime, the signal and background exert similar pulls on the control value, the former towards the peak location while the latter towards the overall midpoint of $500$.
    The effect of this spread can be mitigated by averaging multiple readouts.
    By the central limit theorem, we expect the spread to reduce by a factor of $\sqrt{M}$ if $M$ readouts are averaged.
    
    \begin{figure}
        \centering
        \includegraphics[width=0.9\textwidth]{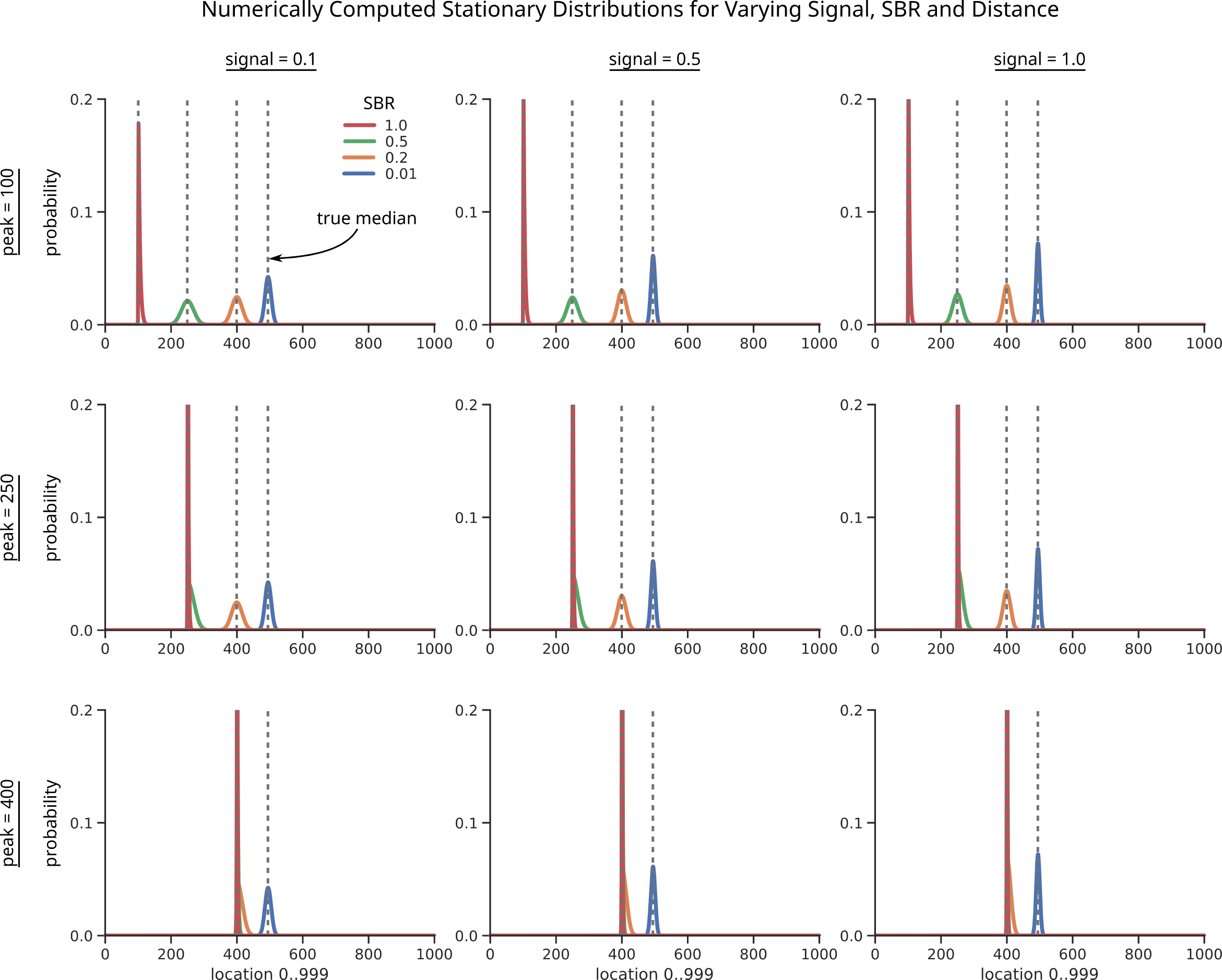}
        \caption{{\bf Stationary distributions of the Markov chain for different combinations of signal, SBR and scene distances.} These stationary distributions are computed numerically for a window length $L=1000$ and a range of signal, SBR and peak locations.
        The true median location does not necessarily line up with the true peak location; in case of weak signal or low SBR, the median is pulled away from the peak towards the overall midpoint of the range ($L/2 =  500$).
        Observe that in all cases, the true median is always located at the location of maximum probability (mode) of the stationary distributions.}
        \label{fig:suppl_statn_dist_examples}
    \end{figure}

    \begin{figure}
        \centering
        \includegraphics[width=0.9\textwidth]{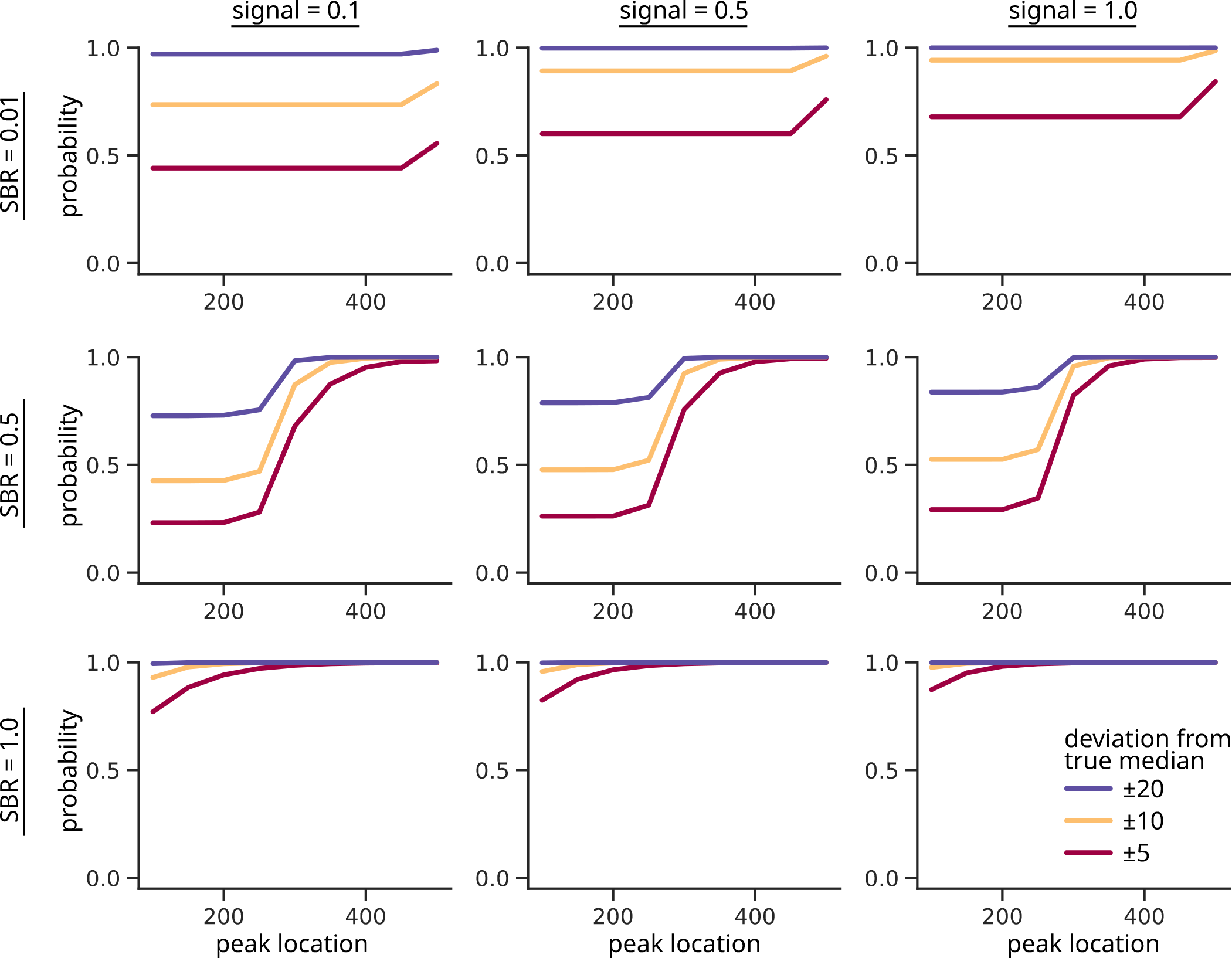}
        \caption{{\bf Concentration of the stationary distributions around the true median location.} 
        We study the spread of the stationary distribution around the true median location as a function of signal strength, SBR, and the true peak location (distance) for a window length of $1000$. At low SBRs, the binner control value may wander farther from the true median, but the probability of wandering farther than $\pm 20$ is small.
        At high SBRs, the binner is within $\pm 5$ of the true median with probability $>75\%$.
        At intermediate SBRs, both the peak location and the overall midpoint of the range exert similar pulls, hence the probabilities have a stronger dependence on the true peak location.}
        \label{fig:suppl_statn_dist_stats}
    \end{figure}
    
    \smallskip
    \noindent{\bf Extensions:} 
    Although the analysis show above was for a single-stage 2-bin binner, the theory can be applied to arbitrary binners in a tree-based EDH by restricting the window to a sub-range dictated by the parent binner.
    Our model assumes that the step sizes are $\pm 1$.
    We can derive transition matrices for larger step sizes.
    For example, the $\pm 10$-step-size case will still be a tri-diagonal matrix with non-zero entries along the main diagonal and the $10^\text{th}$ off-diagonals.
    In the next section we study some heuristics to speed up binner convergence to the median by using larger step sizes.

\clearpage
\section{Heuristics for Speeding up Binner Convergence \label{app:convergence}}
\begin{figure}[!htb]
    \centering
     \includegraphics[width=0.7\textwidth]{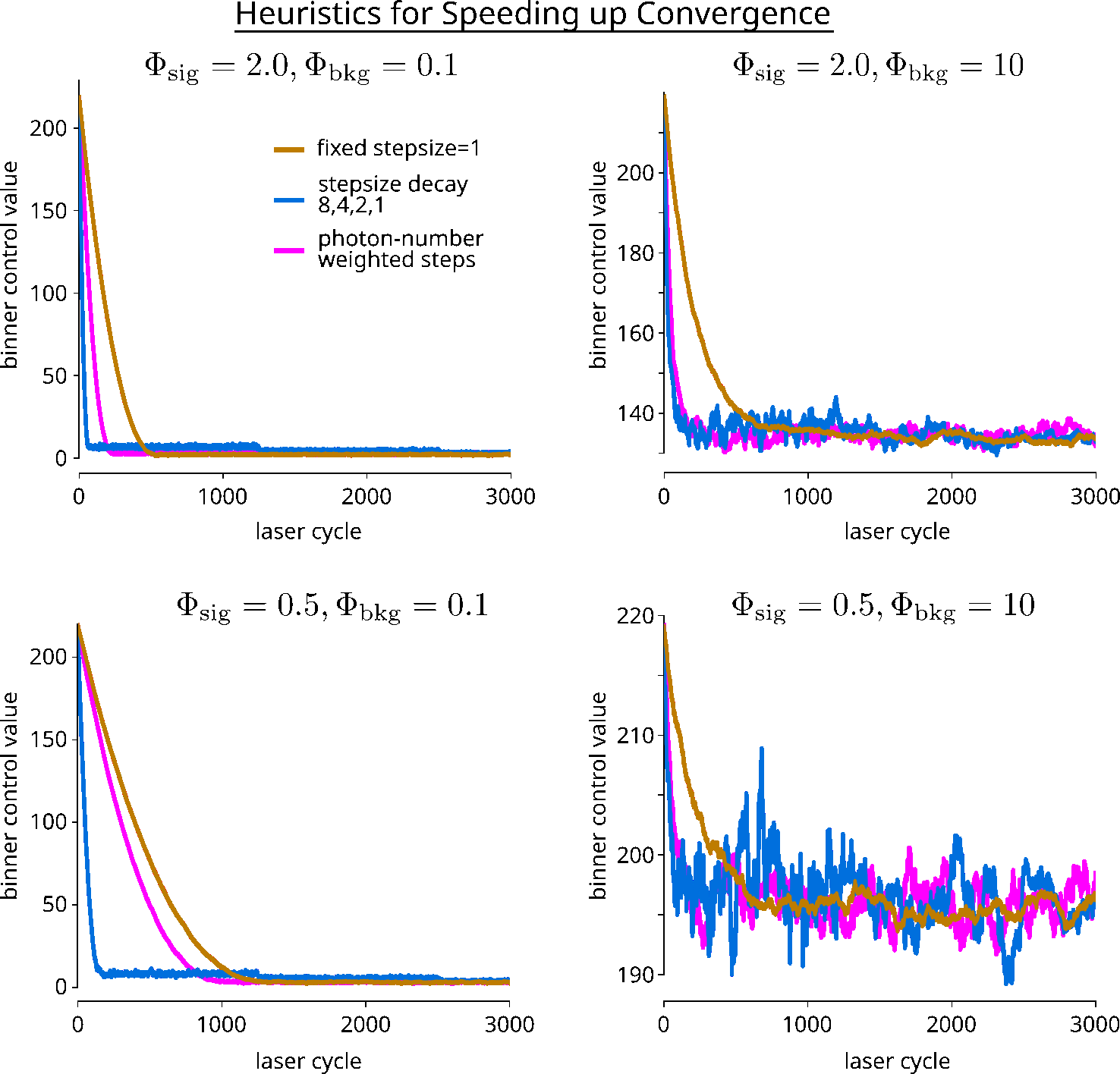}
     \caption{\textbf{Simulation study of rate of convergence.} We analyze three different stepping schemes: a n\"aive method that takes small constant steps of size 1, a weighted step method that uses steps equal to the difference in the number of return events in a cycle, and finally an ad hoc stepping schedule that takes large steps initially and then reduces the step size for subsequent laser pulses.
     A variable step-size schedule gives the best rate of convergence.
     (a) In a high-signal, low-background regime the variable step-size method quickly converges to the median.
     (b) In a high-signal, high-background regime we observe quick convergence but the final estimate is still quite noisy.
     (c) In a low-signal, low-background regime, the convergence takes longer than with a high-strength signal, but a variable step-size schedule achieves 10$\times$ improvement over the other schemes.
     (d) In the low-signal, high-background regime, the final estimates show large excursions from the true median location.
     These plots also suggest that there is an advantage in averaging multiple measurements post-convergence, especially in high-background situations.
     \label{fig:convergence_simulation}
     }
\end{figure}

We performed a Monte Carlo simulation study of a single binner with three different stepping schemes over different operating conditions (low and high signal power in the presence of low and high background light levels) for $5000$ laser cycles:
\begin{itemize}
    \item Constant small step size of 1,
    \item Photon-number-weighted step size: the step size is the difference between the number of early and late events,
    \item 4-stage coarse-to-fine step-size schedule ($8\rightarrow 4 \rightarrow 2 \rightarrow 1$), each for one-quarter of the total cycles.

\end{itemize}
Suppl. Fig.~\ref{fig:convergence_simulation} shows the evolution of a single binner's control value as a function of laser cycles.
(Although we ran the simulation to $5000$ cycles, these plots are only shown up to $3000$ cycles to focus on the interesting regions of the convergence trends.)
The coarse-to-fine stepping scheme gives the fastest convergence in all operating conditions.
It is faster than the constant-step-size method by a factor of at least $10\times$ in most SBR regimes.
In the high-signal-strength regime, the scheme settles to an optimum quite rapidly but then it is limited by the step size.
This effect is visible in Suppl. Fig.~\ref{fig:convergence_simulation}(a) and (c) where the blue plot seems to converge quite rapidly at first but then it makes further improvement with a finer step size as seen from the discrete jumps at 1250 and 2500 cycles. 
The results suggests that signal-dependent step-size optimizations can further speed up convergence.
In high-strength regimes, it helps to rapidly decay the step sizes to the finest level, whereas in low-strength regimes, larger steps should persist for larger fractions of the total exposure time budget.
How do we choose step sizes in practice?
We can use a heuristic step-size schedule informed by this simulation study.
We can perform an initial ``calibration'' scan to assess SBR conditions at different scene points.
High-SBR pixels can rapidly decay to small step sizes, while low-SBR pixels use large step sizes for longer durations.

The difference between the constant-small-step-size and weighted schemes is less marked, though the weighted scheme (magenta) does converge faster than the constant-size scheme (gold).
In Suppl. Fig.~\ref{fig:convergence_simulation} we only see a slight advantage for the low-strength regimes where the weighted scheme has similar convergence rate as the constant-step scheme.
This behavior arises because in the low signal plus background regimes, very few photon events are generated in each cycle, so there are very few cases where the boundary moves more than one unit per cycle.
In the high-strength regime, the weighted scheme does noticeably better than the constant-step scheme.
However, with a strong signal, convergence is fast with all schemes, although the variable-step-scheme does wander more around the median until it gets to a small step size.

\clearpage
\section{Additional Results}\label{app:add-results}
\subsection{Single-pixel simulations}

\begin{figure}[!ht]
  \centering
  \includegraphics[width=0.8\textwidth]{}
  \caption{ {\bf Results of mean absolute error.} Each box plot represents the mean absolute error in distance estimates for 20 combinations of signal and background strengths (see main text). (a) Error plots with a single measurement from each binner. (b) Results after computing the median of distance estimates obtained from a series of 5 EDH read-outs.
  \label{fig:mae}
  }
\end{figure}

We record the mean absolute error (MAE) in the distance estimate where the mean is computed over $100$ runs and each run has a different randomly chosen ground-truth distance value.
The parameter combinations are the same as those for the single-pixel simulations described in the main text.
The absolute errors can be quite large (observe the logarithmic scale on the y-axis of these plots).
Suppl. Fig.~\ref{fig:mae} shows results for a wide (5 ns) Gaussian pulse.

When using a single EDH readout for estimating distance, the quadratic curve fitting method has a wider spread in MAE values as seen in Suppl. Fig.~\ref{fig:mae}(a). We posit this behavior arises from the fluctuations in the binner outputs.
If they do not precisely line up around the peak location, the curve fitting can ``amplify'' the error further.
This effect can be avoided by averaging multiple measurements.

Suppl. Fig.~\ref{fig:mae}(b) shows the effect of averaging multiple readouts on the mean absolute error of the final distance estimates.
With averaging, the quadratic fitting method achieves almost the same MAE as the conventional EWH method with a narrow Gaussian pulse.
Looking at the results in the figures, we believe that combining read-outs improves expected accuracy in all conditions, but especially for quadratic curve fitting.
Multiple read-outs do increase the bandwidth per run five-fold; nevertheless, that is still well less than a tenth of the data for a single 1024-bin EWH read-out.

\subsection{Example EDH Outputs for Rendered Transient Distributions}
Simulated transient-distribution data generated with a rendering engine \cite{gutierrez2021itof2dtof} give us some insights into the diversity of transient shapes that one may encounter in analysis and simulation.
A vast majority of the simulated transients consist of a single peak.
However, there are situations where (due to inter-reflections and multi-path effects) transient shapes can deviate significantly from the simple single-Gaussian-peak model that is widely used in practice.
We show some simulated plots of EDH outputs in Suppl. Fig.~\ref{fig:transient_examples}.
Note that this sample is biased in that we choose to show transients that we deemed ``interesting'' and challenging for an EDH because they have unusual shapes, including asymmetrical peaks, overlapping peaks, or long tails.

\begin{figure}[!ht]
  \centering
  \includegraphics[width=0.8\textwidth]{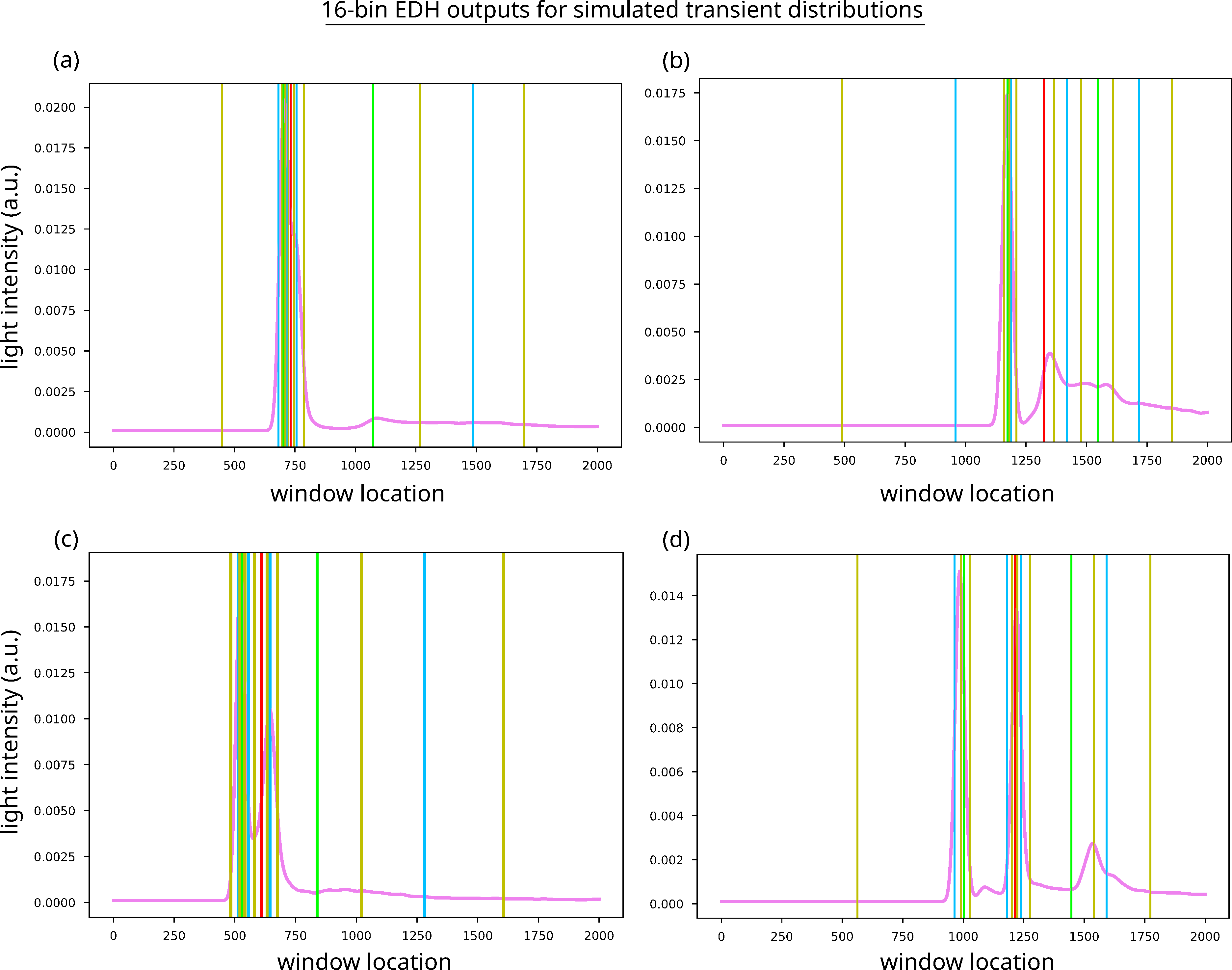}
  \caption{ {\bf EDH outputs for complex transient distributions.} (a) In case of a single strong peak, most of the bins cluster around the main peak. (b) In case of a long trailing edge, the last few bins have monotonically increasing bin widths, which capture the overall shape. (c) In case of two peaks, the locally narrow bins do a good job of clustering around the peak locations. (d) In case of more than 2 peaks, additional EDH stages may be needed to capture additional weak peaks such as the last peak shown here. 
  \label{fig:transient_examples}
  }
\end{figure}

A 16-bin EDH performs well in when there is a single distinct peak (Suppl. Fig.~\ref{fig:transient_examples}(a)).
Observe that around 6 side-bins of the 16 ED histogram bins are spent absorbing the background light while the remaining 10 bins are all quite narrow and clustered around the peak.
In cases with a peak followed by a long trailing edge (Suppl. Fig.~\ref{fig:transient_examples}(b)) there is a gradual increase in the widths of the ED histogram bins. 
In case of multiple peaks (Suppl. Fig.~\ref{fig:transient_examples}(c,d))
the locally narrow bins are clustered around the strong peaks.

\subsection{Additional Scene Results}
Suppl.~Fig.~\ref{fig:additional_scene_results} shows additional distance map reconstructions for three scenes. Conventional low-bin-count coarse EW histogramming methods fail due to heavy quantization artifacts. Our EDH method recovers fine details even with as few as 8 ED bins (over $200\times$ data compression). Our method also outperforms Gray-coding-based compressive histograms method \cite{gutierrez2022compressive} both in terms of mean absolute error and 5\% inlier metrics as seen in Suppl. Tab. \ref{tab:metrics_mae} and \ref{tab:metrics_5pc}.

\begin{figure}[!ht]
  \centering
  \includegraphics[width=0.98\textwidth]{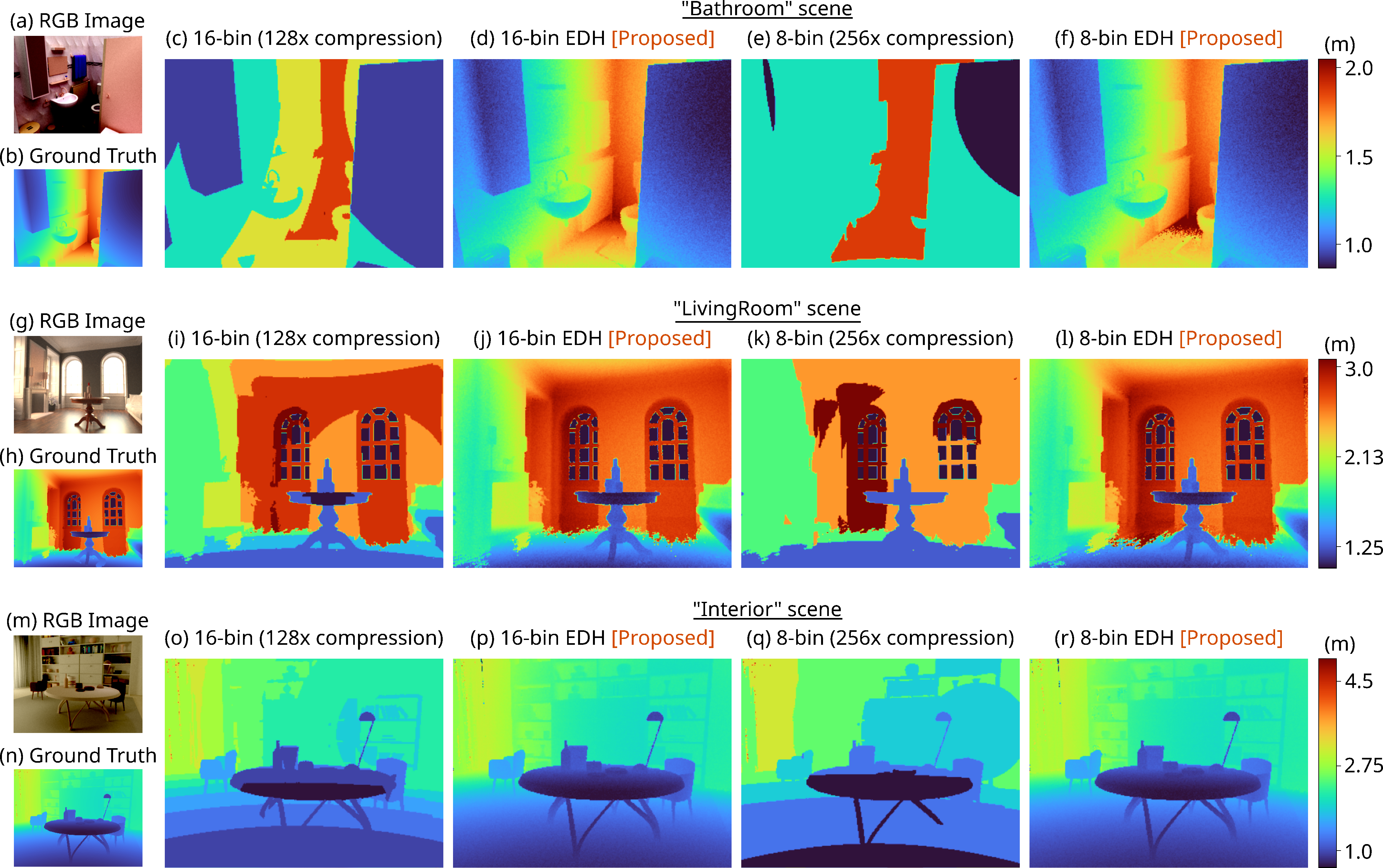}
  \caption{ {\bf Additional simulated distance map results.}
  (a--f) In the ``bathroom'' scene, our method preserves details such as the shape of the sink) even with just an 8-bin EDH. Heavy quantization noise makes it difficult to discern any structures in the distance maps from coarse 16-bin and 8-bin EW histograms. (g--l)  In the ``living room'' scene, conventional coarse-bin EWH does not cause severe loss of depth details because the coarse depth bins happen to line up close to the true depth values. However, notice that our EDH method recovers finer details such as the arch around the windows and the feet of the table. (m--r) In this ``interior'' scene, our method still recovers finer depth details in the background shelf that are heavily quantized in the coarse EWH reconstructions.
Note that these scenes have different ranges of minimum and maximum scene distances as denoted by their respective color bars.
    \label{fig:additional_scene_results}
  }
\end{figure}

\begin{figure}[!t]
\centering 
\includegraphics[width=0.5\columnwidth]{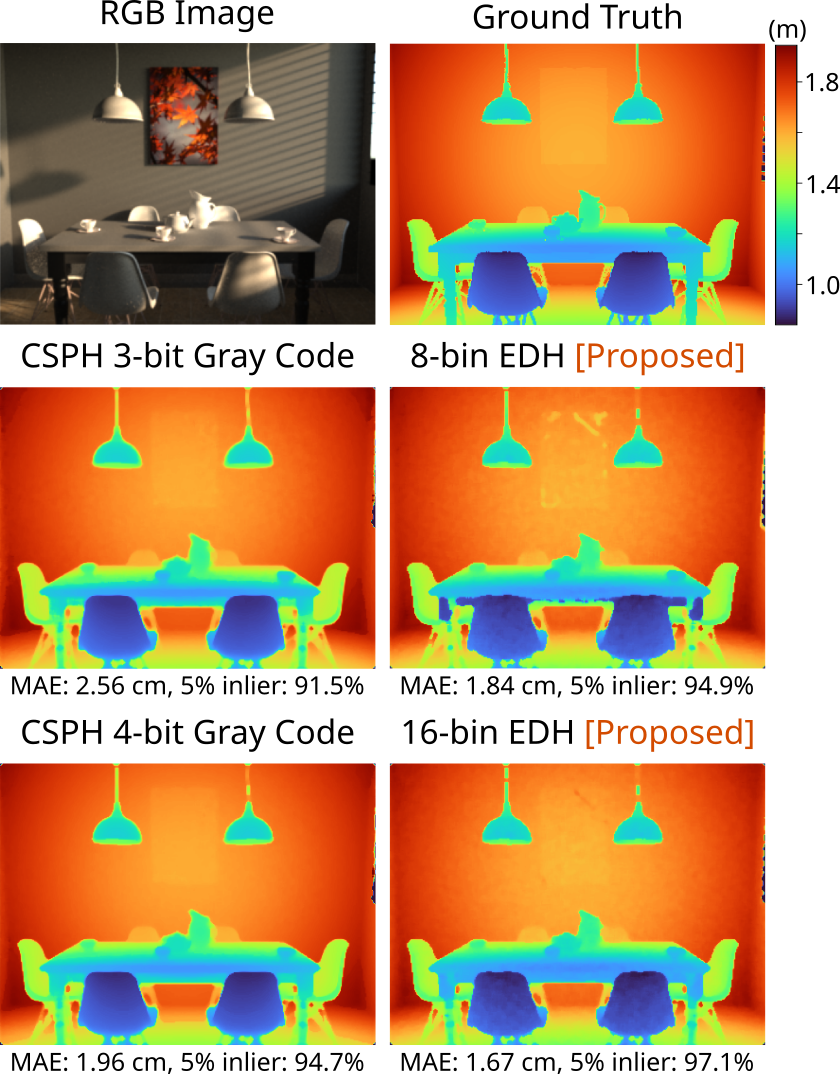}
\caption{{\bf Additional comparisons with compressive histograms \cite{gutierrez2022compressive}.} Our method gives comparable quality as the recent Gray-code-based CSPH method without the need for additional global memory for the compression matrix.
\label{fig:csph_comparison}}
\end{figure}

\clearpage

\begin{table}[!ht]
\addtolength{\tabcolsep}{-3pt}
\caption{Quantitative metrics for simulated scenes show large improvements in MAE with our method.
\label{tab:metrics_mae}
}
\centering
\begin{tabular}{@{}ccccccc@{}}
\toprule
\multicolumn{1}{l}{\multirow{2}{*}{scene name}} & \multicolumn{6}{c}{mean absolute error (cm)}                                                                                                                                                                                                                                                                        \\ \cmidrule(l){2-7} 
\multicolumn{1}{l}{}                            & \multicolumn{1}{l}{EWH 8 bin} & \multicolumn{1}{l}{EWH 16 bin} & \multicolumn{1}{l}{CSPH 3 bit} & \multicolumn{1}{l}{CSPH 4 bit} & \multicolumn{1}{l}{\begin{tabular}[c]{@{}l@{}}EDH 8 bin\\ {[}Proposed{]}\end{tabular}} & \multicolumn{1}{l}{\begin{tabular}[c]{@{}l@{}}EDH 16 bin\\ {[}Proposed{]}\end{tabular}} \\ \midrule
kitchen                                         & 17.01                         & 7.94                           & 10.87                          & 6.77                           & 4.16                                                                                   & 2.76                                                                                    \\
kitchen-2                                       & 12.66                         & 7.60                           & 1.82                           & 1.24                           & 0.81                                                                                   & 1.01                                                                                    \\
bedroom                                         & 16.32                         & 7.98                           & 12.18                          & 7.08                           & 2.77                                                                                   & 1.47                                                                                    \\
veach-ajar                                      & 16.71                         & 7.32                           & 17.78                          & 10.24                          & 3.99                                                                                   & 1.40                                                                                    \\
living-room                                     & 15.21                         & 7.26                           & 716.01                         & 714.64                         & 4.70                                                                                   & 3.37                                                                                    \\
office                                          & 16.04                         & 6.18                           & 63.74                          & 59.36                          & 3.60                                                                                   & 2.38                                                                                    \\
interior-scene                                  & 19.22                         & 7.47                           & 4.91                           & 3.48                           & 2.25                                                                                   & 2.04                                                                                    \\
bathroom-cycles-2                               & 19.26                         & 6.80                           & 7.22                           & 3.70                           & 1.19                                                                                   & 0.90                                                                                    \\
bathroom2                                       & 14.12                         & 8.68                           & 14.03                          & 7.09                           & 4.13                                                                                   & 2.68                                                                                    \\
bathroom                                        & 14.99                         & 7.56                           & 5.75                           & 3.56                           & 1.91                                                                                   & 1.52                                                                                    \\
staircase2                                      & 10.29                         & 4.95                           & 191.12                         & 189.12                         & 5.57                                                                                   & 4.98                                                                                    \\
hot-living                                      & 14.21                         & 7.34                           & 43.13                          & 39.12                          & 3.50                                                                                   & 2.41                                                                                    \\
staircase                                       & 18.00                         & 6.65                           & 46.50                          & 44.71                          & 7.25                                                                                   & 6.81                                                                                    \\
the-sitting-room                                & 15.76                         & 7.71                           & 19.19                          & 14.85                          & 3.20                                                                                   & 2.02                                                                                    \\
dining-room                                     & 16.11                         & 8.55                           & 2.56                           & 1.96                           & 1.85                                                                                   & 1.68                                                                                    \\
classroom                                       & 11.04                         & 5.94                           & 244.53                         & 241.65                         & 6.69                                                                                   & 6.11                                                                                    \\
cbox                                            & 16.14                         & 9.44                           & 3.55                           & 2.09                           & 2.73                                                                                   & 1.48                                                                                    \\
cycles-modern-kitchen-scene                     & 14.58                         & 7.74                           & 14.44                          & 8.25                           & 3.61                                                                                   & 2.35                                                                                    \\
living-room-3                                   & 14.15                         & 8.82                           & 14.24                          & 9.49                           & 3.32                                                                                   & 1.97                                                                                    \\
living-room-2                                   & 14.36                         & 6.55                           & 7.80                           & 4.30                           & 3.23                                                                                   & 1.92                                                                                    \\
my-office                                       & 15.74                         & 7.98                           & 17.20                          & 10.82                          & 4.63                                                                                   & 2.95                                                                                    \\
veach-bidir                                     & 14.24                         & 9.14                           & 11.37                          & 5.46                           & 1.24                                                                                   & 0.87                                                                                    \\
vgroove                                         & 28.30                         & 2.95                           & 1.84                           & 1.71                           & 0.57                                                                                   & 0.81                                                                                    \\
living-room-4                                   & 19.95                         & 7.56                           & 11.35                          & 6.86                           & 1.12                                                                                   & 0.79                                                                                    \\
breakfast-hall                                  & 6.53                          & 3.06                           & 540.80                         & 538.37                         & 1.97                                                                                   & 1.69                                                                                    \\ \bottomrule
\end{tabular}
\end{table}

\begin{table}[!ht]
\addtolength{\tabcolsep}{-3pt}
\caption{Quantitative metrics for simulated scenes show large improvements in 5\% inlier metric with our method.
\label{tab:metrics_5pc}
}
\centering
\begin{tabular}{@{}ccccccc@{}}
\toprule
\multirow{2}{*}{scene name} & \multicolumn{6}{c}{\% pixels within 5\% of true distance}                                                                                                                                   \\ \cmidrule(l){2-7} 
                            & EWH 8 bin & EWH 16 bin & CSPH 3 bit & CSPH 4 bit & \begin{tabular}[c]{@{}c@{}}EDH 8 bin\\ {[}Proposed{]}\end{tabular} & \begin{tabular}[c]{@{}c@{}}EDH 16 bin\\ {[}Proposed{]}\end{tabular} \\ \midrule
kitchen                     & 36.1      & 82.4       & 71.9       & 88.8       & 94.9                                                               & 97.2                                                                \\
kitchen-2                   & 17.2      & 25.0       & 84.7       & 95.4       & 98.6                                                               & 99.1                                                                \\
bedroom                     & 31.7      & 68.9       & 66.9       & 82.9       & 95.3                                                               & 98.8                                                                \\
veach-ajar                  & 44.9      & 93.1       & 56.9       & 85.7       & 95.1                                                               & 99.2                                                                \\
living-room                 & 34.0      & 72.9       & 73.3       & 77.3       & 91.8                                                               & 93.4                                                                \\
office                      & 26.3      & 66.1       & 56.9       & 78.9       & 84.7                                                               & 85.9                                                                \\
interior-scene              & 15.8      & 56.1       & 92.1       & 95.1       & 96.9                                                               & 98.0                                                                \\
bathroom-cycles-2           & 14.9      & 49.0       & 38.7       & 88.2       & 98.3                                                               & 99.6                                                                \\
bathroom2                   & 50.3      & 76.5       & 57.5       & 86.7       & 95.4                                                               & 97.6                                                                \\
bathroom                    & 23.5      & 44.2       & 73.1       & 89.2       & 93.6                                                               & 97.0                                                                \\
staircase2                  & 35.7      & 61.7       & 56.1       & 57.5       & 61.8                                                               & 62.3                                                                \\
hot-living                  & 48.1      & 80.6       & 76.0       & 86.2       & 88.4                                                               & 89.1                                                                \\
staircase                   & 17.8      & 67.2       & 72.9       & 76.8       & 84.0                                                               & 84.3                                                                \\
the-sitting-room            & 48.8      & 87.0       & 79.9       & 90.8       & 95.5                                                               & 96.5                                                                \\
dining-room                 & 20.9      & 44.9       & 91.5       & 94.7       & 95.0                                                               & 97.1                                                                \\
classroom                   & 39.3      & 62.4       & 58.7       & 66.0       & 73.4                                                               & 74.0                                                                \\
cbox                        & 33.8      & 63.0       & 94.1       & 97.3       & 97.8                                                               & 99.3                                                                \\
cycles-modern-kitchen-scene & 46.1      & 82.6       & 68.3       & 90.4       & 96.5                                                               & 97.8                                                                \\
living-room-3               & 36.1      & 62.5       & 49.3       & 80.6       & 93.9                                                               & 97.5                                                                \\
living-room-2               & 47.6      & 81.3       & 87.5       & 94.0       & 96.3                                                               & 98.7                                                                \\
my-office                   & 29.6      & 57.6       & 45.1       & 72.8       & 89.5                                                               & 94.5                                                                \\
veach-bidir                 & 28.2      & 44.1       & 41.9       & 75.6       & 99.3                                                               & 99.6                                                                \\
vgroove                     & 0.0       & 34.9       & 25.2       & 34.3       & 97.1                                                               & 90.1                                                                \\
living-room-4               & 7.5       & 28.3       & 3.4        & 31.6       & 97.2                                                               & 99.1                                                                \\
breakfast-hall              & 20.7      & 38.4       & 38.1       & 38.1       & 40.5                                                               & 40.7                                                                \\ \bottomrule
\end{tabular}
\end{table}

\clearpage
\subsection{Additional Hardware Emulation Result}

\begin{figure}[!ht]
\centering
\includegraphics[width=0.8\textwidth]{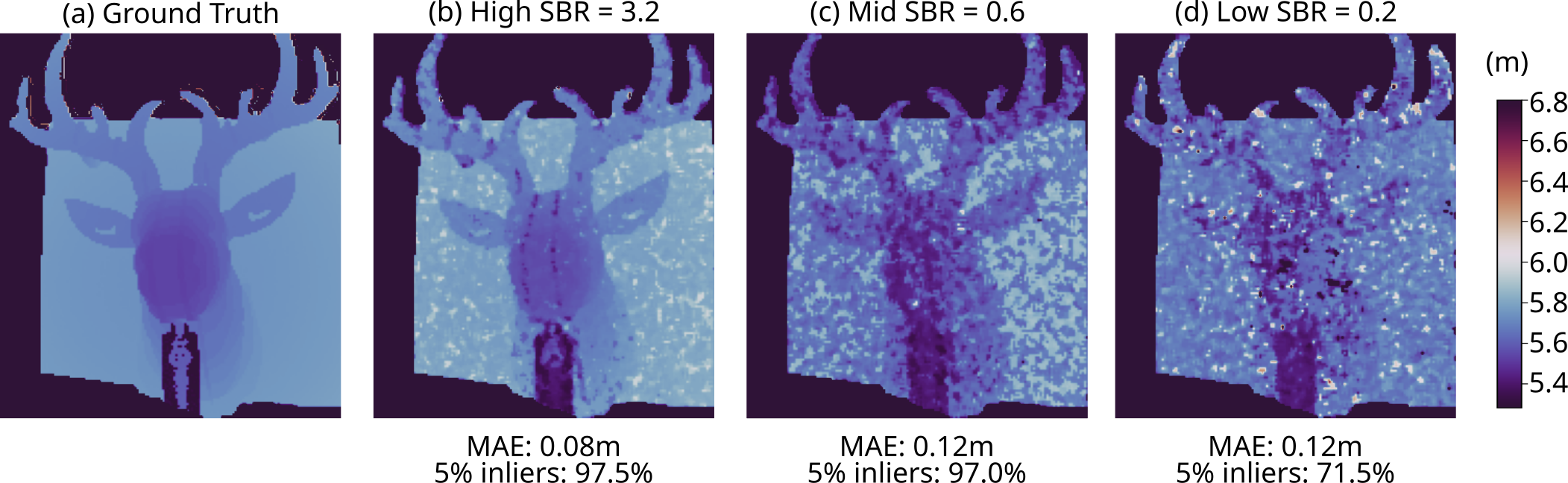}
\caption{ \rnote{ {\bf ``Elk'' distance maps under varying SBR.}
  (a) Ground truth distance map.
  (b) Distance map at high SBR shows various details such as the vertical edges on the elk's nose.
  (c) At mid SBR finer distance details are lost but the overall structure is still visible.
(d) At low SBR the distance map reconstruction suffers from severe artifacts due to background light.}
\label{fig:elk_vary_sbr}
}
\end{figure}

\clearpage
\section{Hardware Considerations: Pixel Designs and Energy Consumption}\label{app:hardware}
\subsection{Pixel and Pixel-Array Designs}

\begin{figure}[!ht]
  \centering
  \includegraphics[width=0.95\textwidth]{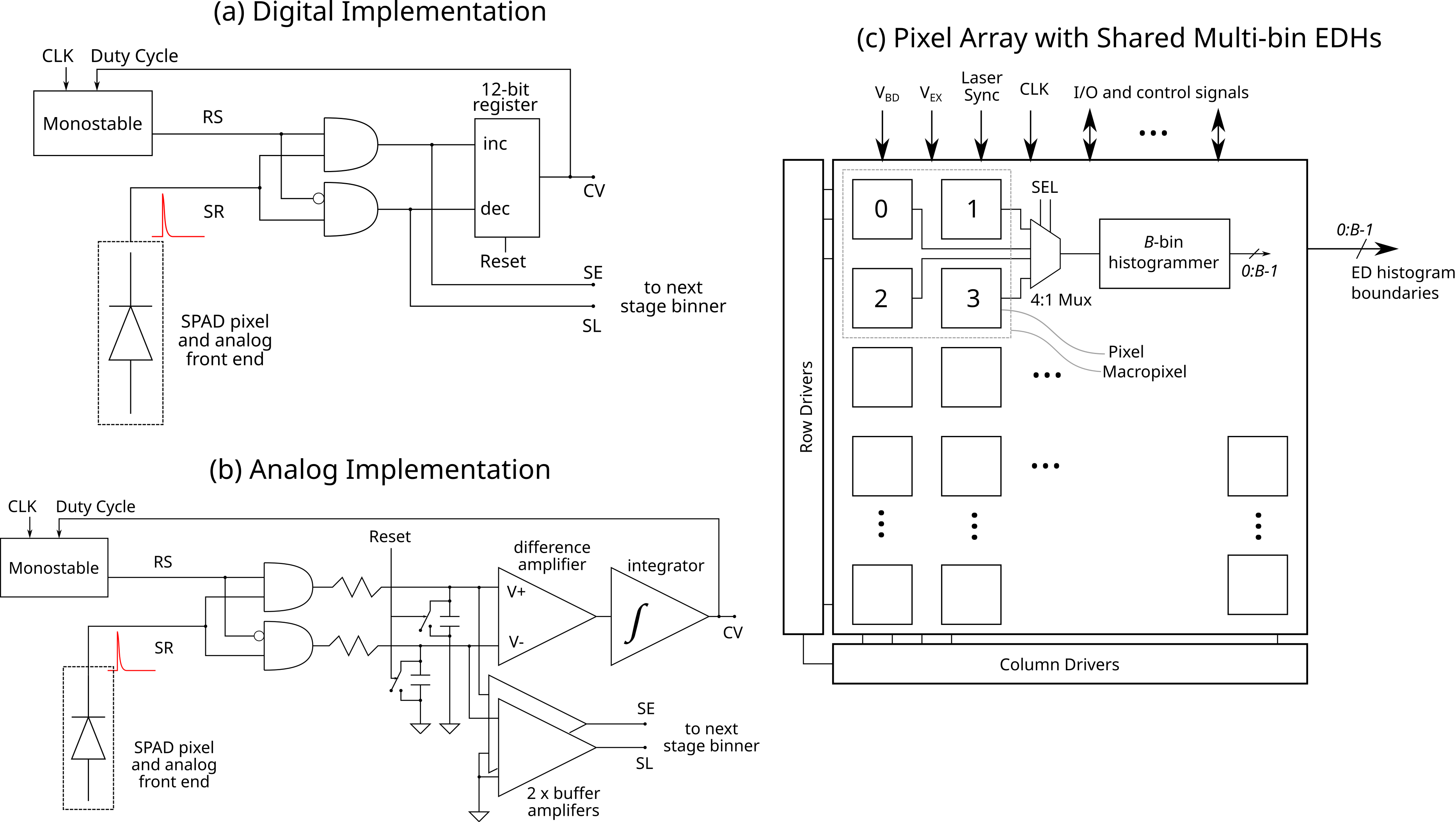}
  \caption{ {\bf Conceptual designs of binner and pixel array hardware implementations.} (a) In a digital implementation, an increment--decrement register holds the control value for an in-pixel binner circuit. 
  (b) An analog implementation the control value is updated by integrating the difference in accumulated voltage from the early and late photon streams using two capacitors.
  (c) Due to limited on-chip resources, it may not be possible to have dedicated multi-bin EDH circuit in each pixel.
  A single EDH can be shared between groups of pixels that are in spatial proximity.
\label{fig:pixel_hardware} }
\end{figure}

Suppl. Fig.~\ref{fig:pixel_hardware}(a) shows one possible digital implementation of an in-pixel binner circuit.
It uses an up--down register to represent the binner's control value.
The initial value of this register is set to the midpoint of the distance-range window.
Each early (late) photon decrements (increments) this register.
The duty cycle of the monostable circuit sets the reference signal switch point splitting the incoming photon stream into early and late streams. 
and the number of early and late photons.
The new register value is fed back to the monostable circuit to set its duty cycle.
The early and late stream readouts are available to feed to the next stage of a multi-binner EDH.
Purely analog implementations that store early and late streams as proportional voltage values and generate a difference signal as the control value may also be possible (Suppl. Fig.~\ref{fig:pixel_hardware}(b)).

In an array of SPC pixels, chip area may be limited, preventing each pixel from having multiple binner stages for a multi-bin EDH.
Groups of pixel may need to share a single EDH as shown in Suppl. Fig.~\ref{fig:pixel_hardware}(c).
For example, a spatial neighborhood (called a ``macropixel'') of $2\times 2$ pixels can be routed through a 4:1 multiplexer to a single EDH that generates a B-bin equi-depth histogram.

\subsection{Energy Consumption Estimates}
A fair comparison of energy and bandwidth requirements for different methods can be quite challenging because precise numerical estimates depend on low-level hardware implementation details.
We show some ``back-of-the-envelope'' calculations to informally motivate the idea that it may in fact be possible to radically reduce both power and bandwidth requirements by adopting EDHs instead of conventional EWHs.

One concern is that our EDH method requires longer exposure times to converge and settle at a good estimate (bin median).
With conventional EWH methods, the rule of thumb is about $100$ laser pulses.
Our empirical convergence analysis suggests that a 16-bin EDHs require an order of magnitude more laser pulses to converge (perhaps 1000, using variable step size).
Do we still save any power if we end up using more laser pulses?
The answer depends on the energy consumed by each laser pulse versus that for producing and transferring histogrammer output.
Our estimates, given in Suppl. Table~\ref{tab:energy}, suggest that an EDH can still give a $10$-fold energy savings (numbers from Morimoto et al. \cite{morimoto2020megapixel}).
We assume that a conventional EWH reads out individual photon timestamps and builds a histogram off-sensor.
(Some designs build partial histograms on sensor, but require larger in-pixel memory and still must perform multiple read-outs per run.)
Such readouts require significantly higher energy per readout than the energy per laser pulse.
Our estimate of energy per laser pulse is based on an average power of \SIrange{10}{100}{\milli\watt} for a \SI{1}{\mega\hertz} repetition rate (corresponding to an unambiguous range of \SI{150}{\meter}).
For fair comparison, we assume that the EDH is constructed using similar hardware as the conventional EWH (that is, using SPAD detectors with built-in time-to-digital conversion circuits).
These estimates will differ for alternative detector technology and hardware implementation.
We also assume that there is negligible energy consumed by any control signals that are needed to program the sensor module.

\begin{table}[!htb]
\caption{{\bf Energy consumption estimates of conventional EWH and proposed EDH.} Rough calculations suggest that there is at least a $10\times$ reduction in energy requirement with EDH versus EWH.
\label{tab:energy}}
\centering
\begin{tabular}{llll}
\hline
\multicolumn{1}{c}{\textbf{Symbol}} & \multicolumn{1}{c}{\textbf{Description}} & \multicolumn{1}{c}{\textbf{Conventional (EWH)}} & \multicolumn{1}{c}{\textbf{Proposed (EDH)}} \\ \hline
$N_\text{laser}$  & number of laser pulses         &  100      & 1000        \\
$E_\text{readout}$ & energy in each window readout  & \SIrange{1}{10}{\micro\joule}    & \SIrange{1}{10}{\micro\joule}  \\
$N_\text{readout}$     & number of readouts    & 100     & \numrange{5}{10}    \\
$E_\text{laser}$    & energy per laser pulse   & \SIrange{10}{100}{\nano\joule} & \SIrange{10}{100}{\nano\joule} \\
\hline
$N_\text{laser}E_\text{laser} + N_\text{readout} E_\text{readout}$ & total energy consumed  & \SIrange{100}{1000}{\micro\joule} & \SIrange{15}{200}{\micro\joule} \\
\hline
\end{tabular}
\end{table}

We obtain a basic estimate of bandwidth by observing that a conventional EWH must allocate enough memory to build a complete photon-count histogram consisting of around $1000$ bins, where each bin has at least an 8-bit (unsigned) integer counter.
Thus the total information (per pixel) is $8000$ bits.
The proposed EDH on the other hand only requires around $15$ 10-bit numbers to be transmitted, corresponding to $\sim 150$ bits of information per pixel, a $50 \times$ reduction in storage and bandwidth requirements.
This size differential highlights a hurdle that EWH designs must face that EDH designs do not.
The size of an EW histogram makes it infeasible to store the full histogram on sensor, but constructing the histogram off-sensor introduces energy and latency overheads.
These constraints severely limit the frame rate and resolution of today's single-photon 3D cameras.
Our EDH-based designs circumvent both of these limitations: we do not construct a full histogram and only transfer a handful of numbers off-chip.

\end{document}